\newif\ifcomments\commentstrue
\newif\ifsubmission\submissionfalse
  \theoremstyle{plain}
  \newtheorem{theorem}{Theorem}[section]
  \newtheorem{proposition}[theorem]{Proposition}
  \newtheorem{lemma}[theorem]{Lemma}
  \theoremstyle{definition}
  \newtheorem{definition}[theorem]{Definition}
  \theoremstyle{definition}
  \newtheorem{remark}{Remark}
  \setlist[itemize]{noitemsep}
  \setlist[itemize]{nolistsep}
  \setlist[enumerate]{noitemsep}
  \setlist[enumerate]{nolistsep}
\definecolor{DarkBlue}{RGB}{0,0,150}
\newcommand{\subh}[1]{\par \vspace{4pt} \noindent \textbf{{#1}}}
\newenvironment{customthm}[1]
  {\innercustomthm}
  {\endinnercustomthm}
\newenvironment{customlemma}[1]
  {\innercustomlemma}
  {\endinnercustomlemma}
\newcommand{\NN}{\mathbb{N}}
\newcommand{\eps}{\varepsilon}
\newcommand{\compIndist}{\overset{c}{\approx}}
\newcommand{\defeq}{=}
\newcommand{\negl}{\mathsf{negl}}
\newcommand{\poly}{\mathsf{poly}}
\newcommand{\Gen}{\mathsf{Gen}}
\newcommand{\Enc}{\mathsf{Enc}}
\newcommand{\Dec}{\mathsf{Dec}}
\newcommand{\cC}{{\cal C}}
\newcommand{\cH}{{\cal H}}
\newcommand{\cM}{{\cal M}}
\newcommand{\sB}{{\sf B}}
\newcommand{\sD}{{\sf D}}
\newcommand{\Msg}{\cM}
\newcommand{\Ciph}{\cC}
\newcommand{\E}{\mathsf{E}}
\newcommand{\SKE}{\mathsf{SKE}}
\newcommand{\PKE}{\mathsf{PKE}}
\newcommand{\pk}{\mathsf{pk}}
\newcommand{\sk}{\mathsf{sk}}
\newcommand{\msg}{\mathsf{msg}}
\newcommand{\str}{\mathsf{str}}
\newcommand{\OurPhi}{\Phi^\star}
\newcommand{\OurXi}{\Xi^\star}
\DeclareMathAlphabet{\mathbfsf}{\encodingdefault}{\sfdefault}{bx}{n}
\newcommand{\Ext}{\mathsf{Ext}}
\newcommand{\TwoExt}{\mathsf{2Ext}}
\newcommand{\bo}{\boldsymbol 1}
\newcommand{\Gt}{\mathsf{GT}}
\newcommand{\Me}{H_{\infty}}
\newcommand{\sdist}[1]{\left\| #1 \right\|_{\sf s}}
\newcommand{\zo}{\{0,1\}}
\newcommand{\btau}{{\boldsymbol\tau}}
\newcommand{\mpair}{\boldsymbol{m}}
\newcommand{\state}{\mathfrak{s}}
\newcommand{\SetupPh}{\Phi}
\newcommand{\CommPh}{\Xi}
\newcommand{\round}[1]{{\color{red}#1}}
\newcommand{\roundcolorstring}{red}
\newcommand{\party}[1]{{\color{blue}#1}}
\newcommand{\partycolorstring}{blue}
\newcommand{\pr}[2]{{\party{#1},\round{#2}}}
\renewcommand{\emptyset}{\varnothing}
\newcommand{\ie}{{\it i.e.}}
\begin{document}
\pagestyle{plain}

\ifsubmission
\author{}
\institute{}
\else
  \title{{\bf How to Subvert Backdoored Encryption:} \\ {\Large Security Against
  Adversaries that Decrypt All Ciphertexts}}
  \author{Thibaut Horel\thanks{Supported, in part, by the National Science Foundation under grants CAREER IIS-1149662, and CNS-1237235, by the Office of Naval Research under grants YIP N00014-14-1-0485 and N00014-17-1-2131, and by a Google Research Award.}\\Harvard University
  \and Sunoo Park\thanks{Supported by the Center for Science of
  Information STC (CSoI),
  an NSF Science and Technology Center (grant agreement CCF-0939370),
  MACS project NSF grant CNS-1413920, and
  a Simons Investigator Award Agreement dated 2012-06-05.}\\MIT
  \and Silas Richelson\\UC Riverside
  \and Vinod Vaikuntanathan\thanks{Supported in part by NSF Grants CNS-1350619, CNS-1414119 and CNS-1718161, Alfred P. Sloan Research Fellowship, Microsoft Faculty Fellowship and a Steven and Renee Finn Career Development Chair from MIT.}\\MIT}
\fi
\date{}
\maketitle

\begin{abstract}
In this work, we examine the feasibility of secure and undetectable point-to-point communication in a world where governments can read all the encrypted communications of their citizens.  We consider a world where the only permitted method of communication is via a government-mandated encryption scheme, instantiated with government-mandated keys.  Parties cannot simply encrypt ciphertexts of some other encryption scheme, because citizens caught trying to communicate outside the government's knowledge (\emph{e.g.}, by encrypting strings which do not appear to be natural language plaintexts) will be arrested.  The one guarantee we suppose is that the government mandates an encryption scheme which \emph{is} semantically secure against outsiders: a perhaps reasonable supposition when a government might consider it advantageous to secure its people's communication against foreign entities. But then, what good is semantic security against an adversary that holds all the keys and has the power to decrypt?

We show that even in the pessimistic scenario described, citizens \emph{can} communicate securely and undetectably.  In our terminology, this translates to a positive statement: all semantically secure encryption schemes support \emph{subliminal communication}.  Informally, this means that there is a two-party protocol between Alice and Bob where the parties exchange ciphertexts of what appears to be a normal conversation even to someone who knows the secret keys and thus can read the corresponding plaintexts.  And yet, at the end of the protocol, Alice will have transmitted her secret message to Bob.  Our security definition requires that the adversary not be able to tell whether Alice and Bob are just having a normal conversation using the mandated encryption scheme, or they are using the mandated encryption scheme for subliminal communication.

Our topics may be thought to fall broadly within the realm of \emph{steganography}: the science of hiding secret communication within innocent-looking messages, or \emph{cover objects}.  However, we deal with the non-standard setting of an adversarially chosen distribution of cover objects (\emph{i.e.}, a stronger-than-usual adversary), and we take advantage of the fact that our cover objects are ciphertexts of a semantically secure encryption scheme to bypass impossibility results which we show for broader classes of steganographic schemes. We give several constructions of subliminal communication schemes under the assumption that key exchange protocols with pseudorandom messages exist (such as Diffie-Hellman, which in fact has truly random messages). Each construction leverages the assumed semantic security of the adversarially chosen encryption scheme, in order to achieve subliminal communication.
\end{abstract}


\thispagestyle{empty}
\newpage
\pagenumbering{roman}
\newpage
\pagenumbering{arabic}


\section{Introduction}

Suppose that we lived in a world where the government wished to read all the
communications of its citizens, and thus decreed that citizens must not
communicate in any way other than by using a specific, government-mandated
encryption scheme with government-mandated keys. Even face-to-face
communication is not allowed: in this Orwellian world, anyone who is caught
speaking to another person will be arrested for treason. Similarly, anyone
whose communications appear to be hiding information will be arrested:
\emph{e.g.}, if the plaintexts encrypted using the government-mandated scheme
are themselves ciphertexts of a different encryption scheme. However, the one
assumption that we entertain in this paper, is that the government-mandated
encryption scheme is, in fact, semantically secure: this is a tenable
supposition with respect to a government that considers secure encryption to be
in its interest, in order to prevent foreign powers from spying on its
citizens' communications.

A natural question then arises: is there any way that the citizens would be able to communicate in a fashion undetectable to the government, based only on the semantic security of the government-mandated encryption scheme, and \emph{despite the fact that the government knows the keys and has the ability to decrypt all ciphertexts}?\footnote{We note that one could, alternatively, consider an adversary with decryption capabilities arising from possession of some sort of ``backdoor.'' For the purposes of this paper, we opted for the simpler and still sufficiently expressive model where the adversary's decryption power comes from knowledge of all the decryption keys.} What can semantic security possibly guarantee in a setting where the adversary has the private keys?

This question may appear to fall broadly within the realm of \emph{steganography}: the science of hiding secret communications within other innocent-looking communications (called ``cover objects''), in an undetectable way. Indeed, it can be shown that if two parties have a shared secret, then based on slight variants of existing techniques for \emph{secret-key steganography}, they can conduct communications hidden from the government.\footnote{We refer the reader to Section~\ref{sec:related} for more details.}

However, the question of whether two parties who have never met before can
conduct hidden communications is more interesting. This is related to the
questions of \emph{public-key steganography} and \emph{steganographic key
exchange} which were both first formalized by von Ahn and Hopper~\cite{AH04}.
Public-key steganography is inadequate in our setting since exchanging or
publishing public keys is potentially conspicuous and thus is not an option in
our setting. All prior constructions of steganographic key exchange require the
initial sampling of a public random string that serves as a public parameter of
the steganographic scheme. Intuitively, in these constructions, the public
random string can be thought to serve the purpose of selecting a specific
steganographic scheme from a family of schemes \emph{after} the adversary has
chosen a strategy. That is, the schemes crucially assume that the adversary
(the dystopian government, in our story above) cannot choose its covertext
distribution as a function of the public parameter.

It is conservative and realistic to expect a malicious adversary to choose the
covertext distribution {\em after} the honest parties have decided on their
communication protocol (including the public parameters). After all, malice
never sleeps~\cite{Micali16}. Alas, we show that if the covertext distribution
is allowed to depend on the communication protocol, steganographic
communication is impossible. In other words, for every purported steganographic
communication protocol, there is a covertext distribution (even one with high
min-entropy) relative to which the communication protocol fails to embed
subliminal messages. The relatively simple counterexample we construct is
inspired by the impossibility of deterministic extraction.

\subh{Semantic Security to the Rescue?}
However, this impossibility result does not directly apply to our setting, as
our covertext distribution is restricted to be a sequence of ciphertexts (that
may encrypt arbitrary messages).  Moreover, the ciphertexts are semantically
secure against entities that are not privy to the private keys. We define the
notion of a \emph{subliminal communication scheme}
(Definition~\ref{def:subliminal}) as a steganographic communication scheme
where security holds relative to covertext distributions that are guaranteed to
be ciphertexts of some semantically secure encryption scheme. Is there a way to
use semantic security to enable subliminal communication?

Our first answer to this question is negative. In particular, consider the following natural construction: first, design an extractor function $f$; then, to subliminally transmit a message bit $b$, sample encryptions $c$ of a (even adversarially prescribed) plaintext $m$ using independent randomness every time, until $f(c) = b$. There are two reasons this idea does not work. First, if the plaintext bit $b$ is not random, the adversary can detect this fact by simply applying the extractor function $f$ to the transmitted covertext. Second, the government can pick an adversarial (semantically secure) encryption scheme where the extractor function $f$ is constant on all ciphertexts; this is again similar to the impossibility of deterministic extraction.

Nevertheless, we show how to circumvent these difficulties and use the semantic
security of the underlying (adversarial) encryption scheme and construct
a subliminal communication scheme. 

\begin{theorem}[Informal version of Theorem
	\ref{thm:subl}]\label{thm:construction1_informal} Under the decisional
	Diffie-Hellman (DDH) assumption---or any other assumption that gives rise
	to a key exchange protocol with messages indistinguishable from
	random---there is a subliminal communication scheme which allows the
	transmission of $O(\log\kappa)$ many bits per ciphertext after a setup
	phase of $\widetilde O(\log\kappa)$ ciphertexts ($\kappa$ is the security
	parameter).
\end{theorem}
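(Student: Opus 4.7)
The plan is to combine three ingredients: (i) ciphertexts of a semantically secure scheme necessarily have pseudo-min-entropy, (ii) a seeded strong randomness extractor, and (iii) a key-exchange protocol with pseudorandom messages (e.g., Diffie--Hellman under DDH). The core technique is \emph{rejection sampling}: Alice samples fresh encryptions of the prescribed cover plaintext $m$ until a seeded extractor applied to the resulting ciphertext outputs the $\ell = O(\log \kappa)$ bits she wishes to transmit subliminally. Bob recovers those bits by re-applying the extractor to the received ciphertext, ignoring the decrypted plaintext entirely.

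The first ingredient is a lemma showing that for any plaintext $m$, the ciphertext distribution $\Enc_{\pk}(m)$ has HILL pseudo-min-entropy $\omega(\log \kappa)$. The reduction is standard: if $\Enc_{\pk}(m)$ collided with some fixed $c^\star$ with non-negligible probability $p$, then correctness forces $\Enc_{\pk}(m')$ to avoid $c^\star$ whenever $m'$ decrypts to something other than $m$, so the test ``is $c = c^\star$?'' would break semantic security with advantage $p$. Instantiating a strong seeded extractor $\Ext\colon \zo^d \times \zo^n \to \zo^\ell$ with $d, \ell = O(\log \kappa)$ that is $\negl$-close to uniform on sources of min-entropy $\omega(\log \kappa)$, and composing it with the HILL guarantee, gives that $(s, \Ext(s; \Enc_{\pk}(m)))$ is computationally indistinguishable from the uniform distribution on $\zo^{d+\ell}$.

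The full protocol runs in two phases. In the setup phase Alice transmits a fresh random extractor seed $s$ of length $O(\log \kappa)$ by encoding it as the plaintext of $\widetilde O(\log \kappa)$ ciphertexts, which Bob recovers by decryption. In the communication phase, each ciphertext carries $\ell$ covert bits via rejection sampling; by the extractor guarantee the probability that a fresh trial matches any specific block is $(1 \pm \negl)/2^\ell$, so each ciphertext requires $\poly(\kappa)$ fresh-encryption trials in expectation. The initial $\poly(\kappa)$ covert bits run a DH key exchange (Alice sending $g^a$ and Bob sending $g^b$), establishing a shared $K = g^{ab}$; every subsequent covert bit carries a symmetric encryption of Alice's actual message under a key derived from $K$, chosen so that its ciphertexts are pseudorandom to anyone lacking $K$.

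Security proceeds by hybrids: first replace Alice's and Bob's DH messages with uniform strings (indistinguishable by the KE assumption), then replace the symmetric ciphertexts under $K$ with uniform strings (indistinguishable by pseudorandomness of the KE-derived key together with symmetric-key security). In this final hybrid every embedded block is truly uniform, so by the extractor guarantee the entire transcript is computationally indistinguishable from an honest stream of fresh encryptions of the cover plaintexts. The main obstacle is exactly this last step: because $\Enc_{\pk}(m)$ has only HILL (not statistical) entropy, the extractor output is only \emph{computationally} close to uniform, so one cannot simply statistically couple rejection-sampled ciphertexts to fresh ones; instead one reduces a transcript-level distinguisher to a distinguisher between $(s, \Ext(s; \Enc_{\pk}(m)))$ and the uniform distribution, which in turn reduces to semantic security of the adversarially chosen encryption scheme.
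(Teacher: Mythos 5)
Your communication phase (rejection sampling against a strong seeded extractor with $O(\log\kappa)$-bit output, embedding a pseudorandom key exchange, then pseudorandom symmetric ciphertexts, with the hybrid argument you sketch) matches the paper's construction closely, and your min-entropy lemma for ciphertexts is essentially the paper's Lemma on super-logarithmic min-entropy. However, your setup phase has a fatal gap: you have Alice transmit the extractor seed $s$ ``by encoding it as the plaintext of $\widetilde O(\log\kappa)$ ciphertexts, which Bob recovers by decryption.'' The adversary in this model holds all decryption keys, so it decrypts every ciphertext and checks that the plaintexts are distributed according to the (adversarially specified) next-message distribution $\cM$ --- which may even be a constant message. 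A decrypted plaintext that is a fresh random seed instead of a sample from $\cM$ is immediately detectable, so this step violates subliminal indistinguishability outright; the whole point of the model is that nothing hidden may be placed in the plaintexts themselves.

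This is precisely the difficulty the paper's construction is built to solve, and it is also where the paper's impossibility results bite: the seed cannot be a fixed public parameter (the adversary chooses $\E$ after seeing the protocol, and deterministic single-ciphertext extraction then fails, cf.\ Theorem~\ref{thm:impossibility-1}), and it cannot be sent covertly before a seed exists, which is a chicken-and-egg problem your proposal silently skips. The paper's resolution is to establish $S$ from completely \emph{unmodified} ciphertexts of the innocuous conversation: for each consecutive pair of ciphertexts one bit is defined as $\Gt(c,c')$, and semantic security --- used against the extractor, which lacks the keys --- makes each pair computationally indistinguishable from two i.i.d.\ encryptions of a fixed message, for which the same-source extractor $\Gt$ has negligible bias even at $\omega(\log\kappa)$ min-entropy (Lemma~\ref{lemma:2ext-comp}, Theorem~\ref{thm:ext}). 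Since those ciphertexts are honest encryptions of $\cM$-samples, indistinguishability during setup is trivial, and this one-bit-per-pair extraction is exactly what accounts for the $\widetilde O(\log\kappa)$-ciphertext setup cost in the theorem statement. The resulting seed is public (the eavesdropper learns it too), which is fine because only rejection-sampled \emph{random-looking} strings are ever embedded afterwards. Without replacing your seed-transmission step by something of this kind, the proof does not go through.
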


We then show how to improve our first construction to reduce the length of the
setup phase under additional assumptions.


\subsection{Overview of Our Construction}
The first idea in our construction is implicit in essentially all the works in
steganography starting from \cite{Sim83}: namely, to achieve subliminal
communication of arbitrary messages, it is sufficient to be able to
undetectably communicate \emph{uniformly randomly distributed strings of one's choice}. In other words, Alice samples a string $r$ which is randomly distributed,
produces some ciphertext(s) to be sent to Bob, such that Bob is able to learn
$r$ from them, and yet a PPT eavesdropper Eve who
sees the entire communication transcript cannot distinguish between the
following two cases:
\begin{enumerate}
\item Alice is indeed sending (hereafter, ``embedding'') random strings to Bob, or
\item Alice is producing ciphertexts using the unmodified government-mandated
	encryption algorithm, without embedding such random strings.
\end{enumerate}
To be more precise, the indistinguishability requirement holds for any given
(adversarially specified) distribution $\mathcal{M}$ of message sequences that
Alice may choose to encrypt using the government-mandated encryption scheme.
Notice that this does not preclude that Eve may be able to learn $r$ and
indeed, our constructions do allow an eavesdropper to learn the embedded
strings. 
Given the ability to undetectably communicated randomly distributed strings,
Alice and Bob can then embed to each other the messages of a key-exchange
protocol with randomly distributed messages (such as Diffie-Hellman) to establish a shared secret, and then embed to each other ciphertexts of a secret-key encryption scheme with pseudorandom ciphertexts, using the established secret as the key.

All known constructions of such \emph{undetectable random string embedding} rely on the sampling of a public random seed after the adversarial strategy is fixed. In this paper, however, we are interested in bootstrapping hidden communications from the very ground up, and we are not willing to assume that the parties start from a state where such a seed is already present.

We observe that the ability to embed randomly distributed strings \emph{of one's choice} --- rather than, \emph{e.g.}, to apply a deterministic function to ciphertexts of the government-mandated encryption scheme, and thereby obtain randomly distributed strings which the creator of the ciphertexts did not choose --- is crucial to the above-outlined scheme.
The notion of undetectably embedding \emph{exogenous} random strings ---
\emph{i.e.}, strings that are randomly distributed outside of Alice's control, but both Alice and Bob can read them --- is seemingly much weaker, and certainly cannot be used to embed key exchange messages or secret-key ciphertexts.
However, we observe that this weaker primitive turns out to be achievable, for
our specific setting, without the troublesome starting assumption of a public random seed.
We identify a method for embedding \emph{exogenous} random strings into ciphertexts of an adversarially chosen encryption scheme (interestingly, our method does not generalize to embedding into arbitrary min-entropy distributions). We then exploit this method to allow the communicating parties to establish a random seed --- from which point they can proceed to embed random strings \emph{of their choice}, as described above.

In building this weaker primitive,
in order to bypass our earlier-described impossibility result, we
extract from two ciphertexts at a time, instead of one. We begin with the following simple idea: for each consecutive pair of
ciphertexts $c$ and $c'$, a single hidden (random) bit $b$ is defined by $b
= f(c,c')$ where $f$ is some two-source extractor. It is initially unclear why this
should work because (1) $c$ and $c'$ are encryptions of messages $m$ and $m'$
which are potentially dependent, and two-source extractors are not guaranteed to
work without independence; and (2) even if this difficulty could  be
overcome, ciphertexts of semantically secure encryption scheme can have
min-entropy as small as $\omega(\log \kappa)$ (where $\kappa$ is the security
parameter) and no two-source extractor known to this day can extract from such a small min-entropy.

We overcome difficulty (1) by relying on the semantic security of the ciphertexts of the adversarially chosen encryption scheme. Paradoxically, even though the adversary knows the decryption key, we exploit the fact that semantic security still holds against the \emph{extractor}, which does not have the decryption key.
The inputs in our case are ciphertexts which are not necessarily
independent, but semantic security implies that they are computationally indistinguishable from being
independent. Thus, the output of $f(c,c')$ is
pseudorandom. Indeed, when $f$ outputs a single bit (as in our construction), the output is also
statistically close to random. The crucial point here is that the semantic
security of the encryption scheme is used not against the government, but
rather against the extraction function $f$.

Our next observation, to address difficulty (2), is that the ciphertexts are not only computationally
independent, but they are also computationally indistinguishable from i.i.d. In
particular, each pair of ciphertexts is indistinguishable from a pair of encryptions of $0$, by semantic security. Based on this observation, we can use
a very simple ``extractor'', namely, the greater-than function ${\sf GT}$.  In fact,
${\sf GT}$ is an extractor with two input sources, whose output bit has
negligible bias when the sources have $\omega(\log\kappa)$
min-entropy and are \emph{independently and identically distributed} (this
appears to be a folklore observation; see, \emph{e.g.}, \cite{BIW04}).
Because of the last condition, $\Gt$ is not a true two-source extractor according to standard definitions, but is still suitable for our setting.

By repeatedly extracting random bits from pairs of consecutive ciphertexts
using $\Gt$, Alice and Bob can construct a shared random string $s$. Note that
in this process, Alice and Bob generate ciphertexts using the unmodified
government-mandated encryption scheme, so the indistinguishability requirement
clearly holds. We stress again that $s$ is also known to a passive eavesdropper
of the communication. This part of our construction, up to the construction of
the string $s$, is presented in details in Section~\ref{sec:establish-seed}.
From there, constructing a subliminal communication scheme is not hard: Alice
and Bob use $s$ as the seed of a strong seeded extractor to subliminally
communicate random strings \emph{of their choice} as explained in
Section~\ref{sec:overt}. The complete description of our protocol is given in
Section~\ref{sec:full-protocol}.

\subsection{Improved Constructions for Specific Cases}

While our first construction has the advantage of simplicity, the initial phase
to agree on shared random string (using the ${\sf GT}$ function) transmits only
one hidden bit per ciphertext of the government-mandated encryption scheme.
A natural question is whether this rate of transmission can be improved.  We
show that if the government-mandated encryption scheme is \emph{succinct} in
the sense that the ciphertext expansion factor is at most $2$, then it is
possible to improve the rate of transmission in this phase to $O(\log\kappa)$
hidden bits per ciphertext using an alternative construction based on the
extractor from \cite{improv}. In other words, our first result showed that if
the government-mandated encryption scheme is semantically secure, we can use it
to communicate subliminally; the second result shows that if the
government-mandated encryption scheme is efficient, that is even better for us,
in the sense that it can be used for more efficient subliminal communication.

\begin{theorem}[Informal version of Theorem \ref{thm:cc}]\label{thm:construction2_informal}
If there is a secure key exchange protocol whose message distribution is
	pseudorandom, then there is a subliminal communication scheme in which
	a shared seed is established in two exchanges of ciphertexts of a succinct
	encryption scheme.
\end{theorem}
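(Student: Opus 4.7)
The plan is to follow the same high-level blueprint as the proof of Theorem~\ref{thm:subl}, but to replace the single-bit $\Gt$ extractor used in the seed-establishment phase with a multi-bit two-source extractor (namely, the one from~\cite{improv}) that extracts $\Theta(\log\kappa)$ bits at a time from a pair of ciphertexts. Together with the succinctness assumption on the government-mandated encryption scheme, this lets Alice and Bob derive a seed of length $\Omega(\log\kappa)$ from only a constant number of ciphertext exchanges, which they can then use exactly as in Theorem~\ref{thm:subl} to embed the pseudorandom messages of the key-exchange protocol and bootstrap a symmetric-key subliminal channel.

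First I would describe the modified seed-establishment phase: Alice sends a single ciphertext $c_A$ produced by the unmodified government-mandated encryption algorithm on her natural plaintext, and Bob replies with a single ciphertext $c_B$ produced in the same manner. Both parties (and any eavesdropper) then compute $s := f(c_A,c_B)$, where $f$ is the extractor from~\cite{improv}. Because the transcript consists of ciphertexts generated by the honest encryption algorithm on honest messages, the undetectability requirement is immediate: the joint distribution of what the adversary sees is literally the mandated distribution.

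Next I would argue that $s$ is computationally close to a uniformly random shared seed. Using semantic security against $f$ (which is \PPT{} and does not hold the decryption key), a standard two-step hybrid replaces $c_A$ and then $c_B$ with independent encryptions of the zero message, yielding a pair $(c_A^\star,c_B^\star)$ i.i.d.\ from $\Enc(0)$; hence $f(c_A,c_B)\compIndist f(c_A^\star,c_B^\star)$. It then remains to argue that on the i.i.d.\ distribution $\Enc(0)^{\otimes 2}$, the extractor from~\cite{improv} outputs $\Theta(\log\kappa)$ bits with negligible statistical distance from uniform. Here the succinctness assumption enters crucially: ciphertext expansion at most $2$, combined with the min-entropy lower bound $\omega(\log\kappa)$ implied by semantic security, puts $\Enc(0)$ in the parameter regime for which~\cite{improv} yields a multi-bit extractor for i.i.d.\ sources. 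Chaining the two statements gives $s \compIndist U_{\Theta(\log\kappa)}$.

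Finally, given a shared seed of length $\Omega(\log\kappa)$, the remainder of the proof is identical to that of Theorem~\ref{thm:subl}: the parties use $s$ as the seed of a strong seeded extractor (Section~\ref{sec:overt}) to subliminally transmit random strings of their choice, embed the pseudorandom messages of the key-exchange protocol to obtain a shared secret, and switch to a symmetric-key subliminal channel built from an encryption scheme with pseudorandom ciphertexts. The main obstacle is the middle step above: verifying that the succinctness hypothesis, in conjunction with semantic security, truly supplies the min-entropy rate required by the~\cite{improv} extractor on i.i.d.\ inputs. This is precisely the place where the improvement over Theorem~\ref{thm:subl} is paid for, since the first construction used $\Gt$ to extract only a single bit per pair and thus only needed the weaker $\omega(\log\kappa)$ min-entropy guarantee that semantic security alone provides.
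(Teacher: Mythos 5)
There is a genuine gap, and it sits exactly at the step you flag as ``the main obstacle'': your hybrid reduces the analysis to the i.i.d.\ distribution $\E.\Enc(\pk,0)^{\otimes 2}$ and then asserts that succinctness plus the $\omega(\log\kappa)$ min-entropy guaranteed by semantic security puts $\E.\Enc(\pk,0)$ in the parameter regime of the extractor from \cite{improv}. It does not. That extractor needs min-entropy \emph{rate} above $1/2$ (min-entropy roughly $p$ out of $n\leq 2p$ bits), whereas semantic security only guarantees that encryptions of a \emph{fixed} message have min-entropy $\omega(\log\kappa)$ (Lemma~\ref{lemma:cond-min-ent}); succinctness says nothing about the randomness consumed when encrypting a fixed plaintext, and the adversary can mandate a succinct, semantically secure scheme whose encryptions of $0$ use only, say, $\log^2\kappa$ bits of entropy. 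On such a scheme your reduced distribution $\E.\Enc(\pk,0)^{\otimes 2}$ falls far outside the regime of \cite{improv}, so the chain $s\compIndist U_{\Theta(\log\kappa)}$ breaks.

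The paper's proof of Theorem~\ref{thm:cc} avoids this by choosing a different reference distribution: not encryptions of zero, but $Y^{\pk}=\E.\Enc(\pk,U_p)$, an encryption of a \emph{uniformly random plaintext}. Decryption correctness makes the map injective on plaintexts, so $\Me(Y^{\pk})\geq p$, and succinctness ($n\leq(1-\gamma)2p$) is exactly what makes this a rate-above-$1/2$ source to which the truncated extractor $\sB$ applies with negligible bias. Semantic security then says only that the actual ciphertexts $C$ are \emph{computationally} indistinguishable from $Y^{\pk}$ (i.e., have HILL entropy $p$), and because the output length is $v=O(\log\kappa)$, this computational guarantee is converted into statistical closeness of $(PK,SK,\sB(C_0^{PK},C_1^{PK}))$ to $(PK,SK,U_v)$ by adapting the collision-based distinguisher of Lemma~\ref{lemma:2ext-comp}; from there the argument follows Theorem~\ref{thm:ext} and the rest of your outline (seeded-extractor rejection sampling, subliminal key exchange, pseudorandom $\SKE$) matches the paper. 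To repair your proposal you would need to replace the hybrid to $\E.\Enc(\pk,0)$ with this HILL-entropy argument relative to encryptions of random messages, and make explicit the computational-to-statistical step for the short extractor output in the presence of the keys.
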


Theorem~\ref{thm:construction1_informal} exploited the specific nature of the
cover object distribution in our setting (specifically, that a sequence of
encryptions of arbitrary messages is indistinguishable from an i.i.d. sequence
of encryptions of zero). Theorem~\ref{thm:construction2_informal} exploits an
additional consequence of the semantic security of the government-mandated
encryption scheme: if it is succinct, then ciphertexts are computationally
indistinguishable from sources of high min-entropy (\emph{i.e.}, they have large HILL-entropy).

It may be possible to use more advanced two-source extractors to work with a larger class of government-mandated encryption schemes (with larger expansion factors); however, the best known such extractors have an inverse polynomial error rate~\cite{CZ16} (whereas our construction's extractor has negligible error). Consequently, designing a subliminal communication protocol using these extractors seems to require additional ideas, and we leave this as an open problem.

Finally, we show yet another approach in cases where the distribution of
``innocent'' messages to be encrypted under the government-mandated encryption
scheme has a certain amount of conditional min-entropy.  For such cases, we
construct an alternative scheme that leverages the semantic security of the
encryption scheme in a rather different way: namely, the key fact for this
alternative construction is that (in the absence of a decryption key)
a ciphertext appears independent of the message it encrypts. In this case,
running a two-source extractor on the message and the ciphertext works. The
resulting improvement in the efficiency of the scheme is comparable to that of
Theorem~\ref{thm:construction2_informal}.

\begin{theorem}[Informal version of Theorem \ref{thm:mc}]\label{thm:construction3_informal}
If there is a secure key exchange protocol whose message distribution is
	pseudorandom, then there is a subliminal communication scheme: 
\begin{itemize}
	\item for any cover distribution consisting of ciphertexts of a semantically secure encryption scheme, if the \emph{innocent message distribution} $\mathcal{M}$ has conditional min-entropy rate $1/2$, or
	\item for any cover distribution consisting of ciphertexts of a semantically secure \emph{and succinct} encryption scheme,
	if the \emph{innocent message distribution} $\mathcal{M}$ has conditional min-entropy $\omega(\log\kappa)$.
\end{itemize}
In both cases, the shared seed is established during the setup phase in only
	two exchanges of ciphertexts.
\end{theorem}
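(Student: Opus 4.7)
The plan is to use the pair (innocent message, ciphertext) as the two inputs to a two-source extractor, exploiting the fact that---from the viewpoint of the extraction procedure, which need not hold the decryption key---semantic security renders the ciphertext computationally independent of the plaintext it encrypts. This gives a way to produce an exogenous shared random seed in only two exchanges, after which we plug into the machinery of Theorem~\ref{thm:construction1_informal} to obtain full subliminal communication.

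First I would set up the seed-establishment phase using the two exchanges promised by the theorem. Alice draws an innocent message $m_A \sim \mathcal{M}$ and sends its encryption $c_A$ under the government-mandated scheme; Bob symmetrically sends $c_B$ encrypting $m_B \sim \mathcal{M}$. Both parties (and the eavesdropper) then compute the shared string $s = f(m_A, c_A) \,\|\, f(m_B, c_B)$ for an appropriate two-source extractor $f$. Note that $s$ is not required to be secret---it plays the role of an exogenous random seed---and once it is available, Alice and Bob invoke the second phase of the protocol of Theorem~\ref{thm:construction1_informal} verbatim: rejection-sample ciphertexts against a strong seeded extractor keyed by $s$ to embed random strings of their choice, then run the pseudorandom key exchange, and finally communicate with an SKE whose ciphertexts are pseudorandom. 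Undetectability of this second phase is inherited from Theorem~\ref{thm:construction1_informal}.

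The central technical claim is that each $f(m, c)$ is statistically close to uniform. By semantic security applied through a standard hybrid, $(m, \Enc(\pk, m))$ is computationally indistinguishable from $(m, \Enc(\pk, 0^{|m|}))$, whose two coordinates are genuinely independent. It then suffices to lower-bound the min-entropies of the two coordinates and choose $f$ accordingly. By assumption, $m$ has conditional min-entropy rate $1/2$ in case (i) and $\omega(\log\kappa)$ in case (ii), in both cases conditioned on the prior transcript. A semantically secure ciphertext always has min-entropy at least $\omega(\log\kappa)$---otherwise a brute-force guess breaks security---and in case (ii) succinctness further upgrades this to a usable entropy \emph{rate}. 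In case (i), a suitable unbalanced two-source extractor with negligible error (handling one rate-$1/2$ source and one source of polylogarithmic min-entropy) suffices; in case (ii), we reuse the extractor of~\cite{improv} already exploited in Theorem~\ref{thm:construction2_informal}.

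The main obstacle I anticipate is the min-entropy bookkeeping across the two exchanges and the subtlety in how semantic security is applied. Since $(m_A, c_A)$ is revealed to the eavesdropper before $(m_B, c_B)$ is produced, one must argue that the conditional distribution of $m_B$ retains its conditional min-entropy given the earlier transcript---precisely the strength of the \emph{conditional} hypothesis in the theorem---and similarly that the ciphertext source retains its pseudorandom structure conditional on this history. Crucially, the reduction to semantic security is played against the extractor, not against the government: the government holds the secret key and thus does not see $c$ as independent of $m$, whereas the hypothetical distinguisher built in the reduction does not, and it is the latter's view that controls extractor correctness. Carefully separating these two roles of semantic security---invoking it against the extractor while allowing the eavesdropper to decrypt---is the step I expect to require the most care.
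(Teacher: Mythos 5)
Your high-level route is exactly the paper's: extract the shared seed from (plaintext, ciphertext) pairs, invoking semantic security \emph{against the extractor} to treat the pair as computationally independent, and then run the rejection-sampling, key-exchange and $\SKE$ phases of the main construction (Definition~\ref{def:full-protocol}) unchanged. However, the step you yourself flag as delicate is in fact left without its key ingredient. Your hybrid $(m,\E.\Enc(\pk,m))\compIndist(m,\E.\Enc(\pk,0^{|m|}))$ only shows that $f(m,c)$ is \emph{computationally} indistinguishable from uniform for distinguishers that do not hold $\sk$; it does not by itself give the statistical closeness you assert, and computational closeness against key-less distinguishers is not sufficient, because in the subliminal-indistinguishability game the adversary is handed $(\pk,\sk)$ and the seed $S$ is consumed by the later rejection-sampled ciphertexts, so the hybrid replacing $S$ by a truly random string must be justified \emph{jointly with the keys}. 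The paper's mechanism is precisely what is missing from your argument: since the extractor output is only $O(\log\kappa)$ bits, a collision-probability reduction (Lemma~\ref{lemma:2ext-comp}, adapted verbatim in the proof of Theorem~\ref{thm:mc}) shows that with overwhelming probability over $(\pk,\sk)$ the output is statistically $\negl(\kappa)$-close to uniform --- a distinguisher estimating collisions of extractor outputs would otherwise break semantic security without needing any key --- and this yields closeness of $(\pk,\sk,S)$ to $(\pk,\sk,U)$ as needed for the Hybrid~0/Hybrid~1 step of Theorem~\ref{thm:subl}. This is also why the output length of $f$ must be capped at $O(\log\kappa)$, a constraint your proposal never imposes.

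Second, your extractor choice in case (ii) is wrong: the extractor \textsc{Ble} of \cite{improv} used in Theorem~\ref{thm:cc} requires \emph{both} (equal-length) sources to have min-entropy rate above $1/2$, whereas in the second bullet the message is only guaranteed $\omega(\log\kappa)$ min-entropy, so \textsc{Ble} does not apply to the pair $(m,c)$. Both bullets are instances of Theorem~\ref{thm:mc} instantiated with the \emph{unbalanced} extractor of \cite{Raz05}: in case (i) the message is the rate-$(1/2+\delta)$ source and the ciphertext the $\omega(\log\kappa)$ source (Lemma~\ref{lemma:cond-min-ent}); in case (ii) the roles are swapped --- succinctness plus semantic security gives the ciphertext HILL entropy of rate above $1/2$, as in the proof of Theorem~\ref{thm:cc}, while the message supplies the $\omega(\log\kappa)$ source. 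Your case (i) instantiation is correct, and your observation about conditioning on the earlier transcript (which is where the \emph{conditional} min-entropy hypothesis is used) matches the paper; with the two repairs above the argument goes through.
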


We conclude this introductory section with some discussion of our results in a wider context.

\subh{On Our Modeling Assumptions.}
Our model considers a relatively powerful adversary that, for example, has the ability
to choose the encryption scheme using which all parties must communicate, and to decrypt all such communications.
We believe that this can be very realistic in certain scenarios, but 
it is also important to note the limitations that our model places on the adversary.

The most obvious limitation is that the encryption scheme chosen by the adversary must be semantically secure
(against third parties that do not have the ability to decrypt).
Another assumption is that citizens are able to run algorithms of their choice on their own computers
without, for instance, having every computational step monitored by the government.
Moreover, citizens may use encryption randomness of their choice
when producing ciphertexts of the government-mandated encryption scheme:
in fact, this is a key fact that our construction exploits.
Interestingly, secrecy of the encryption randomness from the adversary is irrelevant:
after all, the adversary can always choose an encryption scheme where the encryption randomness
is recoverable given the decryption key.
Despite this, the ability of the encryptor to choose the randomness to input to the encryption algorithm
can be exploited---as by our construction---to allow for subliminal communication.

\subh{The Meaning of Semantic Security when the Adversary Can Decrypt.}
In an alternate light, our work may be viewed as asking the question:
\emph{what guarantee, if any, does semantic security provide against adversary in possession of the decryption key?}
Our results find, perhaps surprisingly, that some meaningful guarantee is still provided by semantic security
even against an adversary is able to decrypt: more specifically, that \emph{any} communication channel
allowing transmission of ciphertexts can be leveraged to allow for undetectable communications
between two parties that have never met.
From this perspective, our work may be viewed as the latest in a scattered series of recent works
that consider what guarantees can be provided by cryptographic primitives that are somehow ``compromised''---examples of recent works in this general flavor are cited in Section \ref{sec:related} below.

\subh{Concrete Security Parameters.}
From a more practical perspective, it may be relevant to consider that the government in our hypothetical Orwellian scenario
would be incentivized to opt for an encryption scheme with the least possible security level
so as to ensure security against foreign powers. In cases where the government considers itself
to have more computational power than foreign adversaries (perhaps by a constant factor),
this could create an interesting situation where the security parameter
with which the government-mandated scheme must be instantiated
is \emph{below} what is necessary to ensure security against the government's own computational power.

Such a situation could be risky for citizens' hidden communications:
intuitively, our constructions guarantee indistinguishability \emph{against the citizens' own government}
between an ``innocent'' encrypted conversation and one which is carrying hidden subliminal messages.
However, the distinguishing advantage in this indistinguishability game
\emph{depends on the security parameter} of the government-mandated encryption scheme.
Thus, it could be that the two distributions are far enough apart for the citizens' own government to distinguish
(though not for foreign governments to distinguish).
We observe that citizens cognizant of this situation can further reduce the distinguishing advantage beyond that provided by our basic construction, using the standard technique of amplifying the proximity of a distribution (which is far from random) to uniformly random, by taking the XOR of several samples from the far-from-random distribution.

Having outlined this potential concern and solution,
in the rest of the paper we will disregard these issues in the interest of clarity of exposition,
and present a purely asymptotic analysis.

\subh{Open Problems.}
Our work suggests a number of open problems. A natural one is the extent to which the modeling assumptions that this work makes --- such as the ability of honest encryptors to use true randomness for encryption --- can be relaxed or removed,
while preserving the ability to communicate subliminally. For example, one could imagine yet another alternate universe, in which the hypothetical Orwellian government
not only mandates that citizens use the prescribed encryption scheme, but also that their encryption randomness must be derived from a specific government-mandated pseudorandom generator.

The other open problems raised by our work are of a more technical nature and
better understood in the context of the specific details of our constructions;
for this reason we defer their discussion to Section~\ref{sec:open}.

\subsection{Other Related Work}
\label{sec:steg}
\label{sec:related}

The scientific study of steganography was initiated by Simmons more than thirty years ago \cite{Sim83},
and is the earliest mention of the term ``subliminal channel''
referring to the conveyance of information in a
cryptosystem's output in a way that is different from the intended output,\footnote{This phrasing is loosely borrowed from \cite{YY97}.} of which we are aware.
Subsequent works such as \cite{Cac98,Mit99,ZFKPPWWW98}
initially explored information-theoretic treatments of steganography,
and then Hopper, Langford, and von Ahn \cite{HLA02}
gave the first complexity-theoretic (secret-key) treatment almost two decades later.
Public-key variants of steganographic notions---namely,
public-key steganography and steganographic key exchange---were
first defined by \cite{AH04}.
There is very little subsequent literature on public-key steganographic primitives;
one notable example is by Backes and Cachin \cite{BC05}, which considers public-key steganography against
active attacks (their attack model, which is stronger than that of \cite{AH04},
was also considered in \cite{HLA02} but had never been applied to the public-key setting).

The alternative perspective of our work as addressing the question of whether any sort of secret communication can be achieved
via transmission of ciphertexts of an adversarially designed cryptosystem alone fits into a scattered series of recent works
that consider what guarantees can or cannot be provided by compromised cryptographic primitives.
For example, Goldreich \cite{Gol02}, and later, Cohen and Klein \cite{CK16},
consider what unpredictability guarantee is achieved by the classic GGM construction \cite{GGM86} when the traditionally secret seed is known;
Austrin et al. \cite{ACMPS14} study whether certain cryptographic primitives can be secure even in the presence of an adversary that has limited ability to tamper with honest parties' randomness;
Dodis et al. \cite{DGGJR15} consider what cryptographic primitives can be built based on backdoored pseudorandom generators;
and Bellare, Jaeger, and Kane \cite{BJK15} present attacks that work against any symmetric-key encryption scheme,
that completely compromise security by
undetectably corrupting the algorithms of the encryption scheme (such attacks might, for example, be feasible if an adversary could generate
a bad version of a widely used cryptographic library and install it on his target's computer).

The last work mentioned above, \cite{BJK15}, is actually part of the broader field of kleptography,
originally introduced by Young and Yung \cite{YY97,YY96b,YY96a}, which is also
relevant context for the present work.
Broadly speaking, a \emph{kleptographic attack} ``uses cryptography against
cryptography'' \cite{YY97} --- \emph{i.e.},
changes the behavior of a cryptographic system in a fashion undetectable to an honest user with black-box access to the cryptosystem, such that the use of the modified system leaks some secret information (\emph{e.g.}, plaintexts or key material)
to the attacker who performed the modification.
An example of such an attack might be to modify
the key generation algorithm of an encryption scheme such that an adversary in possession of a
``back door'' can derive the private key from the public key, yet an honest user finds the
generated key pairs to be indistinguishable from correctly produced ones.
Kleptography has enjoyed renewed research activity since \cite{BPR14} introduced a formal model of a specific type of kleptographic attack called \emph{algorithm substitution attacks} (ASAs), motivated by recent revelations suggesting that intelligence agencies have successfully implemented attacks of this nature at scale. Recently, \cite{BL17} formalized an equivalence between certain variants of ASA and steganography.

Our setting differs significantly from kleptography in that the encryption algorithms are public and not tampered with (\emph{i.e.}, adhere to a purported specification),
and in fact may be \emph{known} to be designed by an adversarial party.

\section{Preliminaries}

\paragraph{Notation.}
$\kappa$ is the security parameter throughout. PPT means ``probabilistic polynomial time.''
$[n]$ denotes the set $\{1,\ldots, n\}$. 
$U_n$ is a uniform variable over
$\zo^n$, independent of every other variable in this paper. We write $X\sim Y$
to express that $X$ and $Y$ are identically distributed. Given two variables
$X$ and $Y$ over $\zo^k$, we denote by $\sdist{X - Y}$ the statistical distance defined by:
	\begin{displaymath}
			\sdist{X-Y}
			=\frac{1}{2}\sum_{x\in\zo^k}\big|\Pr[X=x] - \Pr[Y=x]\big|
			\ifsubmission
			\;.
			\else
			=\max_{S\subseteq\zo^k}\big|\Pr[X\in S]-\Pr[Y\in S]\big|.
			\fi
	\end{displaymath}
For a random variable $X$, we define the min-entropy of $X$ by $\Me(X)=-\log\max_{x} \Pr[X=x]$. The collision probability is ${\sf CP}(X)\defeq\sum_x\Pr[X=x]^2$.


\subsection{Encryption and Key Exchange}\label{sec:prelims-crypto}
We assume familiarity with the standard notions of semantically secure
public-key and private-key encryption, and key exchange. This subsection defines notation and additional terminology.

\subh{Public-Key Encryption.}
We use the notation
$\E=(\E.\Gen,\E.\Enc,\E.\Dec)$ 
for the public-key encryption scheme mandated by the adversary.

\subh{Secret-key Encryption.}
We write $\SKE=(\SKE.\Gen,\SKE.\Enc,\SKE.\Dec)$ to denote
a secret-key encryption scheme.
We define a \emph{pseudorandom secret-key encryption scheme}
to be a secret-key encryption scheme whose ciphertexts are indistinguishable from random.
It is a standard result that 
pseudorandom secret-key encryption schemes 
can be built from one-way functions.

\subh{Key Exchange.}
We define a \emph{pseudorandom key-exchange protocol}
to be a key-exchange protocol whose transcripts are distributed
indistinguishably from random messages.
Recall that the standard security guarantee for key-exchange protocols
requires that $(T,K)\compIndist(T,K_\$)$,
where $T$ is a key-exchange protocol transcript, $K$ is the shared key established in $T$,
and $K_\$$ is a random unrelated key. A pseudorandom key-exchange protocol
instead requires that $(T,K)\compIndist(U,K_\$)$ 
where $U$ is the uniform distribution over strings of the appropriate length.

Most known key agreement protocols are pseudorandom; 
in fact, most have truly random messages.
This is the case, for example, 
for the classical protocol of Diffie and Hellman \cite{DH76}.

\subsection{Extractors} 

We will need the following definitions of two-source and seeded extractors.

\begin{definition}
	\label{def:tse}
	The family $\TwoExt: \zo^n \times \zo^m \to\zo^\ell$ is
	a \emph{$(k_1, k_2, \eps)$ two-source extractor}
	if for all $\kappa\in\NN$ and for all pairs $(X,Y)$ of independent random
	variables over $\zo^{n(\kappa)}\times\zo^{m(\kappa)}$ such that $\Me(X)\geq
	k_1(\kappa)$ and $\Me(Y)\geq k_1(\kappa)$, it holds that:
	\begin{equation}
		\label{eq:tse}
		\sdist{\TwoExt_{\kappa}(X,Y)-U_{\ell(\kappa)}}\leq \eps(\kappa).
	\end{equation}
	We say that $\TwoExt$ is \emph{strong} w.r.t. the first input if it
	satisfies the following stronger property:
	\begin{displaymath}
		\sdist{(X,\TwoExt_{\kappa}(X,Y))-(X,U_{\ell(\kappa)})}\leq \eps(\kappa).
	\end{displaymath}
	A strong two-source extractor w.r.t. the second input is defined analogously.
	Finally, we say that $\TwoExt$ is a $(k, \eps)$ \emph{same-source}
	extractor if $n=m$ and \eqref{eq:tse} is only required to hold when $(X,
	Y)$ is a pair of i.i.d. random variables with $\Me(X)=\Me(Y)\geq
	k(\kappa)$.
\end{definition}

\begin{definition}
	\label{def:seeded-ext}
	The family $\Ext: \zo^n \times \zo^m \to\zo^\ell$ is
	a \emph{$(k, \eps)$ seeded extractor}
	if for all $\kappa\in\NN$ and any random
	variable $X$ over $\zo^{m(\kappa)}$ such that $\Me(X)\geq
	k(\kappa)$, it holds that:
	\begin{equation*}
		\sdist{\Ext_{\kappa}(U_{n(\kappa)},X)-U_{\ell(\kappa)}}\leq \eps(\kappa).
	\end{equation*}
	We say moreover that $\Ext$ is \emph{strong} if it
	satisfies the following stronger property:
	\begin{displaymath}
		\sdist{(U_{n(\kappa)},\Ext_{\kappa}(U_{n(\kappa)},X))-(U_{n(\kappa)},U_{\ell(\kappa)})}\leq \eps(\kappa).
	\end{displaymath}
\end{definition}

\section{Subliminal Communication}
\label{sec:subl}

\subh{Conversation Model.} The protocols we will construct take place over
a communication between two parties $P_0$ and $P_1$ alternatingly sending each other
ciphertexts of a public-key encryption scheme. \emph{W.l.o.g.}, we
assume that $P_0$ initiates the communication, and that communication occurs
over a sequence of \emph{exchange-rounds} each of which comprises two sequential messages:
in each exchange-round, one party $P_b$ sends a message to $P_{1-b}$
and then $P_{1-b}$ sends a message to $P_b$. Let
$m_\pr{b}{i}$ denote the plaintext message sent by $P_b$ to $P_{1-b}$ in exchange-round $i$,
and let $\mpair_i=(m_\pr{0}{i},m_\pr{1}{i})$ denote the pair of messages exchanged.
For $i\geq 1$, let us denote by
$$\btau_\pr{0}{i}=(\mpair_1,\dots,\mpair_{i-1})
\mbox{ and }
\btau_\pr{1}{i}=(\mpair_1,\dots,\mpair_{i-1},m_\pr{0}{i})$$ 
the plaintext transcripts available to $P_0$ and $P_1$ respectively during 
exchange-round $i$, in the case when $P_0$ sends the first message in exchange-round $i$.\footnote{If instead $P_1$ spoke first in round $i$, then $\btau_\pr{0}{i}$ would contain $m_\pr{1}{i}$, and $\btau_\pr{1}{i}$ would not contain $m_\pr{0}{i}$.} We define $\btau_\pr{0}{0}$ and $\btau_\pr{1}{0}$ to be empty 
lists (\emph{i.e.}, empty starting transcripts).
(Note that when a notation contains both types of subscripts,
we write the subscripts denoting the party and round in 
\party{\partycolorstring} and \round{\roundcolorstring} respectively, to improve readability.)

Recall that our adversary has the power to decrypt all ciphertexts under 
its chosen public-key encryption scheme $\E$. Intuitively, it is therefore
important that the plaintext conversation between $P_0$ and $P_1$ appears innocuous (and does not,
for example, consist of ciphertexts of another encryption scheme).  
To model this, we assume the existence of a next-message distribution $\mathcal{M}$, 
which outputs a next innocuous message given the transcript of the plaintext conversation so
far. This is denoted by $m_\pr{b}{i}\gets\cM(\btau_\pr{b}{i})$.


\begin{remark}
We emphasize that our main results make no assumptions at all on the distribution $\mathcal{M}$,
and require only that the parties have oracle access to their own next-message distributions.
Our main results hold in the presence of \emph{arbitrary} message distributions:
for example, they hold even in the seemingly inauspicious case when $\mathcal{M}$
is constant, meaning the parties are restricted to repeatedly exchanging a fixed message.

In Section \ref{sec:improvs}, we discuss other more efficient constructions that
can be used in settings where a stronger assumption --- namely, that $\mathcal{M}$
has a certain amount of min-entropy --- is acceptable. This stronger assumption, 
while not without loss of generality, might be rather benign in certain contexts 
(for example, if the messages exchanged are images).
\end{remark}

In all the protocols we consider, the symbol $\state$ is used to denote internal state
kept locally by $P_0$ and $P_1$. It is implicitly assumed that each party's state
contains an up-to-date transcript of all messages received during the protocol.
Parties may additionally keep other information in their internal state,
as a function of the local computations they perform. For $i\geq 1$,
$\state_\pr{b}{i}$ denotes the state of $P_b$ at the conclusion of exchange-round $i$.
Initial states
$\state_\pr{b}{0}=\emptyset$ are empty.

We begin with a simpler definition that only syntactically allows for the transmission of a single message
(Definition~\ref{def:subliminal}). This both serves as a warm-up to the multi-message definition presented next (Definition~\ref{def:mm-subliminal}), and will be used in its own right to prove impossibility results. See Remark~\ref{rmk:equivalence-mm} for further discussion of the relationship between these two definitions.

\begin{definition}
\label{def:subliminal}
A \emph{subliminal communication scheme} is a two-party protocol:
\begin{displaymath}
\Pi^\E=\bigl(\Pi_\pr{0}{1}^\E,\Pi^\E_\pr{1}{1},\Pi^\E_\pr{0}{2},\Pi^\E_\pr{1}{2},
	\dots,\Pi^\E_\pr{0}{r},\Pi^\E_\pr{1}{r};\Pi^\E_{\party{1},{\sf out}}\bigr)
\end{displaymath}
	where $r\in\poly$ is the number of exchange-rounds
	and each $\Pi^\E_\pr{b}{i}$ is a PPT
	algorithm with oracle access to the algorithms of a public-key encryption scheme $\E$.
	Party $P_0$ is assumed to receive as input a message $\msg$ 
	(of at least one bit)
	that is to be conveyed to $P_1$ in an undetectable fashion. 
	The algorithms $\Pi^\E_\pr{b}{i}$ are used by $P_b$ in round $i$, respectively,
	and $\Pi^\E_{\party{1},{\sf out}}$ denotes the algorithm run by $P_1$ to
	produce an output $\msg'$ at the end of the protocol.

A subliminal communication scheme must satisfy the following syntax,
correctness and security guarantees.

\begin{itemize}
\item \textbf{Syntax.} In each exchange-round $i=1,\dots, r$:

		$P_0$ performs the following steps:
			\begin{enumerate}
				\item Sample ``innocuous message'' $m_\pr{0}{i}\leftarrow\mathcal{M}(\btau_\pr{0}{i-1})$. 
				\item\label{itm:genciph0} Generate ciphertext and state $(c_\pr{0}{i},\state_\pr{0}{i})\leftarrow\Pi_\pr{0}{i}^\E(\msg,m_\pr{0}{i},\pk_1,\state_\pr{0}{i-1})$.
				\item Locally store $\state_\pr{0}{i}$ and send $c_\pr{0}{i}$ to $P_1$.
			\end{enumerate}
		Then, $P_1$ performs the following steps:\footnote{Note that the steps executed by $P_0$ and $P_1$ are entirely symmetric except in the following two aspects: first, $P_0$'s input $\msg$ is present in step \ref{itm:genciph0} but not in step \ref{itm:genciph1}; and secondly, the state $\state_\pr{1}{i-1}$ used in step \ref{itm:genciph1} contains the round-$i$ message $c_\pr{0}{i}$, whereas the state $\state_\pr{0}{i-1}$ used in step \ref{itm:genciph0} depends only on the transcript until round $i-1$.}
			\begin{enumerate}
				\item Sample ``innocuous message'' $m_\pr{1}{i}\leftarrow\mathcal{M}(\btau_\pr{1}{i-1})$. 
				\item\label{itm:genciph1} Generate ciphertext and state $(c_\pr{1}{i},\state_\pr{1}{i})\leftarrow\Pi_\pr{1}{i}^\E(m_\pr{1}{i},\pk_0,\state_\pr{1}{i-1})$.
				\item Locally store $\state_\pr{1}{i}$ and send $c_\pr{1}{i}$ to $P_0$.
			\end{enumerate}
		After $r$ rounds, $P_1$ computes ${\sf msg}'=\Pi_{\party{1},{\sf
		out}}^\E(\sk_1,\state_\pr{1}{r})$ and halts.

\item\textbf{Correctness.} For any $\msg\in\{0,1\}^\kappa$, if $P_0$ and $P_1$
	play $\Pi^\E$ honestly, then $\msg'=\msg$ with probability
		$1-\negl(\kappa)$.
		The probability is taken over the key generation
		$(\pk_1,\sk_1),(\pk_2,\sk_2)\gets\E.\Gen$ and the
		randomness of the protocol algorithms, 
		as well as the message distribution $\cM$.

\item\textbf{Subliminal Indistinguishability.}
	For any semantically secure public-key encryption scheme $\E$, any
		${\sf msg}\in\{0,1\}^\kappa$ and any next-message distribution $\cM$,
		for $(\pk_1,\sk_1),(\pk_2,\sk_2)\gets\E.\Gen$, 
		the following distributions are computationally indistinguishable:

		\begin{center}
\begin{tabular}{|p{0.42\textwidth}|p{0.42\textwidth}|}
\hline
\underline{${\sf Ideal}(\pk_1,\sk_1,\pk_2,\sk_2,\cM)${\bf :}}
	& \underline{${\sf Subliminal}_\Pi({\sf msg},\pk_1,\sk_1,\pk_2,\sk_2,\cM)${\bf :}} \\
{\small for $i=1,\dots,r$:

	\quad $m_\pr{0}{i}\gets\cM(\btau_\pr{0}{i})$

	\quad $m_\pr{1}{i}\gets\cM(\btau_\pr{1}{i})$

	\quad $c_\pr{0}{i}\leftarrow\E.\Enc(\pk_{1},m_\pr{0}{i})$

	\quad $c_\pr{1}{i}\leftarrow\E.\Enc(\pk_{0},m_\pr{1}{i})$

output $\bigl(\pk_1,\sk_1,\pk_2,\sk_2;(c_\pr{b}{i})_{b\in\{0,1\},i\in[r]}\bigr)$ } &
{\small for $i=1,\dots,r$:

	\quad $m_\pr{0}{i}\gets\cM(\btau_\pr{0}{i})$

	\quad $m_\pr{1}{i}\gets\cM(\btau_\pr{1}{i})$

	\quad $(c_\pr{0}{i},\state_\pr{0}{i})\leftarrow\Pi_\pr{0}{i}^\E(\msg,m_\pr{0}{i},\pk_1,\state_\pr{0}{i-1})$

	\quad $(c_\pr{1}{i},\state_\pr{1}{i})\leftarrow\Pi_\pr{1}{i}^\E(m_\pr{1}{i},\pk_0,\state_\pr{1}{i-1})$

output $\bigl(\pk_1,\sk_1,\pk_2,\sk_2;(c_\pr{b}{i})_{b\in\{0,1\},i\in[r]}\bigr)$ }\\
\hline
\end{tabular}
		\end{center}

If the \emph{subliminal indistinguishability} requirement is satisfied only for next-message distributions $\cM$ in a restricted set $\mathbb{M}$, rather than for any $\cM$, then $\Pi$ is said to be a \emph{subliminal communication scheme for $\mathbb{M}$}.
\end{itemize}
\end{definition}



\begin{definition}
The \emph{rate} of a subliminal communication protocol $\Pi$ is defined as
	$\frac{2r}{\kappa}$, where $r$ is defined as in Definition
	\ref{def:subliminal}.\footnote{The factor of two comes from the fact that
	each exchange-round contains two messages.} This is the average number of
	bits which are subliminally communicated per ciphertext of $\E$.
\end{definition}

For simplicity, Definition \ref{def:subliminal} presents a communication scheme in which only a single hidden message $\msg$ is transmitted. More generally, it is desirable to transmit multiple messages, and bidirectionally, and perhaps in an adaptive manner.\footnote{That is, the messages to be transmitted may become known as the protocol progresses, rather than all being known at the outset. This is the case, for example, if future messages depend on responses to previous ones.} In multi-message schemes, it may be beneficial for efficiency that the protocol have a two-phase structure where some initial preprocessing is done in the first phase, and then the second phase can thereafter be invoked many times to transmit different hidden messages.\footnote{As a concrete example: consider a simple protocol for transmitting a single encrypted message, consisting of key exchange followed by the transmission of message encrypted under the established key. When adapting this protocol to support multiple messages, it is beneficial to split the protocol into a one-time ``phase 1'' consisting of key exchange, and a ``phase 2'' encompassing the ciphertext transmission which can be invoked many times on different messages using the same phase-1 key. Such a protocol has much better amortized efficiency than simply repeating the single-message protocol many times, \emph{i.e.}, establishing a new key for each ciphertext.} This will a useful notion later in the paper, for our constructions, so we give the definition of a multi-message scheme here.

\begin{definition}\label{def:mm-subliminal}
A \emph{multi-message subliminal communication scheme} is a two-party protocol defined by a pair $(\SetupPh,\CommPh)$
where $\SetupPh$ (``Setup Phase'') and $\CommPh$ (``Communication Phase'') each define a two-party protocol. Each party outputs a state at the end of $\SetupPh$, which it uses as an input in each subsequent invocation of $\CommPh$. An execution of a multi-message subliminal communication scheme consists of an execution of $\SetupPh$ followed by one or more executions of $\CommPh$. More formally:
\begin{gather*}
\SetupPh^\E=\bigl(\Phi_\pr{0}{1}^\E,\Phi^\E_\pr{1}{1},\Phi^\E_\pr{0}{2},\Phi^\E_\pr{1}{2},
	\dots,\Phi^\E_\pr{0}{r},\Phi^\E_\pr{1}{r}\bigr) \\
\CommPh^\E=\bigl(\Xi_\pr{0}{1}^\E,\Xi^\E_\pr{1}{1},\Xi^\E_\pr{0}{2},\Xi^\E_\pr{1}{2},
	\dots,\Xi^\E_\pr{0}{r'},\Xi^\E_\pr{1}{r'};\Xi^\E_{\party{1},{\sf out}}\bigr)
\end{gather*}
	where $r,r'\in\poly$ are the number of exchange-rounds in $\Phi$ and $\Xi$ respectively.
	and where each $\Phi^\E_\pr{b}{i},\Xi^\E_\pr{b}{i}$ is a PPT
	algorithm with oracle access to the algorithms of a public-key encryption scheme $\E$. 
	The protocol must satisfy the following syntax, correctness and security guarantees.

\begin{itemize}
\item \textbf{Syntax.} In each exchange-round $i=1,\dots, r$ of $\Phi$: $P_0$ executes the following steps for $b=0$, and then $P_1$ executes the same steps for $b=1$.
	\begin{enumerate}
		\item Sample ``innocuous message'' $m_\pr{b}{i}\leftarrow\mathcal{M}(\btau_\pr{b}{i-1})$.
		\item Generate ciphertext and state 
			$(c_\pr{b}{i},\state_\pr{b}{i})\leftarrow\Phi_\pr{b}{i}^\E(m_\pr{b}{i},\pk_{1-b},\state_\pr{b}{i-1})$.
		\item Locally store $\state_\pr{b}{i}$ and send $c_\pr{b}{i}$ to $P_{1-b}$.
	\end{enumerate}
	After the completion of $\Phi$, either party may initiate $\Xi$ by sending a first message of the $\Xi$ protocol (with respect to a message $\sf msg$ to be steganographically hidden, known to the initiating party).
	\newcommand{\sender}{P_S}
	\newcommand{\receiver}{P_R}
	Let $\sender$ and $\receiver$ denote the initiating and non-initiating parties in an execution of $\Xi$, respectively.\footnote{Subscripts $S,R\in\{0,1\}$ stand for ``sender'' and ``receiver,'' respectively.}
	Let $\msg\in\{0,1\}^\kappa$ be the hidden message that $\sender$ is to transmit to $\receiver$ in an undetectable fashion during an execution of $\Xi$.

	The execution of $\Xi$ proceeds as follows over exchange-rounds $i'=1,\dots, r'$:
	\begin{itemize}
		\item $\sender$ acts as follows:
			\begin{enumerate}
				\item Sample $m_\pr{S}{r+i'}\leftarrow\mathcal{M}(\btau_\pr{S}{r+i'-1})$.
				\item Generate $(c_\pr{S}{r+i'},\state_\pr{S}{r+i'})\leftarrow\Xi_\pr{0}{i'}^\E(\msg,m_\pr{S}{r+i'},\pk_R,\state_\pr{S}{r+i'-1})$.
				\item Locally store $\state_\pr{S}{r+i'}$ and send $c_\pr{S}{r+i'}$ to $\receiver$.
			\end{enumerate}
		\item $\receiver$ acts as follows:
			\begin{enumerate} 
				\item Sample $m_\pr{R}{r+i'}\leftarrow\mathcal{M}(\btau'_\pr{R}{r+i'-1})$.
				\item Generate $(c_\pr{R}{r+i'},\state_\pr{R}{r+i'})\leftarrow\Xi_\pr{1}{i'}^\E(m_\pr{R}{r+i'},\pk_S,\state_\pr{R}{r+i'-1})$.
				\item Locally store $\state_\pr{R}{r+i'}$ and send $c_\pr{R}{r+i'}$ to $\sender$.
			\end{enumerate}
	\end{itemize}
		At the end of an execution of $\Xi$,
		$P_R$ computes ${\sf msg}'=\Xi^\E_{\party{1},{\sf
		out}}(\sk_1,\state_\pr{1}{r+r'})$.
\item\textbf{Correctness.} For any $\msg\in\{0,1\}^\kappa$, if $P_0$ and $P_1$
	execute $(\Phi,\Xi)$ honestly, then for every execution of $\Xi$, the transmitted and received messages $\msg$ and $\msg'$ are equal with overwhelming probability.
	The probability is taken over the key generation
	$(\pk_1,\sk_1),(\pk_2,\sk_2)\gets\E.\Gen$ and the
	randomness of the protocol algorithms, 
	as well as the message distribution $\cM$.

\item\textbf{Subliminal Indistinguishability.}
For any semantically secure public-key encryption scheme $\E$, any polynomial $p=p(\kappa)$, any sequence of hidden messages
		$\vec{\msg}=(\msg_i)_{i\in[p]}\in(\{0,1\}^\kappa)^p$, any sequence of bits $\vec{b}=(b_1,\dots,b_p)\in\{0,1\}^p$ and any next-message distribution $\cM$,
for $(\pk_b,\sk_b)\gets\E.\Gen$, $b\in\zo$ the following distributions are computationally indistinguishable:

		\begin{center}
	\begin{tabular}{|p{0.42\textwidth}|p{0.45\textwidth}|}
	\hline
	\underline{${\sf Ideal}(\pk_1,\sk_1,\pk_2,\sk_2,\cM)${\bf :}}
		& \underline{${\sf Subliminal}_{\Phi,\Xi}(\vec{\msg},\vec{b},\pk_1,\sk_1,\pk_2,\sk_2,\cM)${\bf :}} \\
	{\small for $i=1,\dots,r+pr'$:
	
		\quad $m_\pr{0}{i}\gets\cM(\btau_\pr{0}{i})$
	
		\quad $m_\pr{1}{i}\gets\cM(\btau_\pr{1}{i})$
	
		\quad $c_\pr{0}{i}\leftarrow\E.\Enc(\pk_{1},m_\pr{0}{i})$
	
		\quad $c_\pr{1}{i}\leftarrow\E.\Enc(\pk_{0},m_\pr{1}{i})$
	
	output:

		\quad $\bigl(\pk_1,\sk_1,\pk_2,\sk_2;(c_\pr{b}{i})_{b\in\{0,1\},i\in[r+pr']}\bigr)$ } &
	{\small for $i=1,\dots,r$:
	
		\quad $m_\pr{0}{i}\gets\cM(\btau_\pr{0}{i})$
	
		\quad $m_\pr{1}{i}\gets\cM(\btau_\pr{1}{i})$
	
		\quad $(c_\pr{0}{i},\state_\pr{0}{i})\leftarrow\Phi_\pr{0}{i}^\E(\msg,m_\pr{0}{i},\pk_1,\state_\pr{0}{i-1})$
	
		\quad $(c_\pr{1}{i},\state_\pr{1}{i})\leftarrow\Phi_\pr{1}{i}^\E(m_\pr{1}{i},\pk_0,\state_\pr{1}{i-1})$

	for $j=1,\dots,p$:

		\newcommand{\bj}{\beta}
		\newcommand{\iot}{\iota}
		\newcommand{\notbj}{\bar{\beta}}
		\quad let $\bj=b_j$ and $\notbj=1-b_j$

		\quad for $i'=1,\dots,r'$:

			\quad\quad let $\iot=r+(j-1)r'+i'$
			
			\quad\quad $m_\pr{\bj}{\iot}\gets\cM(\btau_\pr{\bj}{\iot})$
		
			\quad\quad $m_\pr{\notbj}{\iot}\gets\cM(\btau_\pr{\notbj}{\iot})$
		
			\quad\quad
			$(c_\pr{\bj}{\iot},\state_\pr{\bj}{\iot})\leftarrow\Xi_\pr{\bj}{i'}^\E(\msg,m_\pr{\bj}{\iot},\pk_{\notbj},\state_\pr{\bj}{\iot-1})$
		
			\quad\quad
			$(c_\pr{\notbj}{\iot},\state_\pr{\notbj}{\iot})\leftarrow\Xi_\pr{\notbj}{i'}^\E(m_\pr{\notbj}{\iot},\pk_{\bj},\state_\pr{\notbj}{\iot-1})$
	
	output:

		\quad
		$\bigl(\pk_1,\sk_1,\pk_2,\sk_2;(c_\pr{b}{i})_{b\in\{0,1\},i\in[r+pr']}\bigr)$ }\\
	\hline
	\end{tabular}
		\end{center}

If the \emph{subliminal indistinguishability} requirement is satisfied only for
$M$ in a restricted set $\mathbb{M}$, rather than for any $\cM$, then
$(\Phi,\Xi)$ is said to be a \emph{multi-message subliminal communication
scheme for $\mathbb{M}$}.

\end{itemize}
\end{definition}

\begin{definition}
	The \emph{asymptotic rate} of a multi-message subliminal communication
	protocol $(\Phi,\Xi)$ is defined as $\frac{\kappa}{2r'}$, where $r'$ is
	defined as in Definition \ref{def:mm-subliminal}.  The asymptotic rate is
	the average number of bits which are subliminally communicated per
	ciphertext exchanged between $P_0$ and $P_1$ after the one-time setup phase
	is completed.
\end{definition}

\begin{definition}
	The \emph{setup cost} of a multi-message subliminal communication protocol
	$(\Phi,\Xi)$ is defined as $r$, \emph{i.e.}, the number of rounds in
	$\Phi$.  The setup cost is the number of ciphertexts which must be sent
	back and forth between $P_0$ and $P_1$ in order to complete the setup
	phase.
\end{definition}

\begin{remark}\label{rmk:equivalence-mm}
	Definition \ref{def:mm-subliminal} is \emph{equivalent} to Definition \ref{def:subliminal} in the sense that the existence of any single-message scheme trivially implies a multi-message scheme and vice versa. We present Definition \ref{def:mm-subliminal} as it will be useful for presenting and analyzing asymptotic efficiency of our constructions, but note that this equivalence means that the simpler Definition \ref{def:subliminal} suffices in the context of impossibility (or possibility) results, such as that given in Section~\ref{sec:imposs}.
\end{remark}

\section{Impossibility Results}
\label{sec:imposs}

\subsection{Locally Decodable Subliminal Communication Schemes}
\label{sec:imposs-subl}


A first attempt at achieving subliminal communication might
consider schemes with the following
natural property: the receiving party $P_1$ extracts hidden bits \emph{one
ciphertext at a time}, by the application of a single (possibly randomized)
decoding function. We refer to such schemes as \emph{locally decodable} and our
next impossibility theorem shows that non-trivial locally decodable schemes do
not exist if the encryption scheme $\E$ is chosen
adversarially.

\begin{theorem}
	\label{thm:impossibility-1}
	For any locally decodable protocol $\Pi$ satisfying the syntax of a single-message\footnote{Remark~\ref{rmk:equivalence-mm} discusses the sufficiency of proving impossibility for single-message schemes.} 
	subliminal communication scheme, there exists
	a semantically secure public-key encryption scheme $\E$ dependent on the public randomness of $\Pi$, such that $\E$ violates the correctness condition of Definition \ref{def:subliminal}. Therefore, no locally decodable protocol $\Pi$ is a subliminal communication scheme.
\end{theorem}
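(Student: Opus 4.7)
The plan is to adversarially construct a semantically secure scheme $\E$ on which the single (PPT, possibly randomized) local decoder $D$ of $\Pi$---determined by $\Pi$'s public randomness---evaluates to an essentially constant bit on every ciphertext. Combined with the subliminal indistinguishability requirement (which $\Pi$ must satisfy to be a subliminal communication scheme at all), this will force the bits extracted by the receiver during any honest execution of $\Pi^\E$ to form a fixed constant vector up to negligible noise, so that $\msg'$ is independent of the sender's input $\msg$ and correctness fails.

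Starting from any semantically secure scheme $\E_0$, I would define $\E$ by keeping $\E_0.\Gen$ (augmented with a hard-coded bit $b^\star \in \{0,1\}$ placed in the public key) and $\E_0.\Dec$, and redefining encryption as a rejection sampler: $\E.\Enc(\pk,m)$ draws $c_1,\dots,c_\kappa \gets \E_0.\Enc(\pk,m)$ independently and outputs the first $c_i$ with $D(c_i) = b^\star$ (or $c_\kappa$ if no trial succeeds), where $b^\star$ is fixed at key-generation time to be whichever of $\{0,1\}$ maximizes $\Pr[D(\E_0.\Enc(\pk,0)) = b^\star]$. By pigeonhole this probability is at least $1/2$, so the sampler fails with probability at most $2^{-\kappa}$; moreover, semantic security of $\E_0$ against the PPT predicate $D$ implies this probability is equal (up to negligible slack) for every plaintext, so $\Pr[D(\E.\Enc(\pk,m)) = b^\star] \geq 1 - \negl(\kappa)$ uniformly in $m$. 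The key technical step is to show that $\E$ itself remains semantically secure, which I would prove by reduction: given a distinguisher $A$ for $\E.\Enc(\pk,m_0)$ vs.\ $\E.\Enc(\pk,m_1)$, construct $A'$ that takes a challenge $c \gets \E_0.\Enc(\pk,m_b)$, simulates the remainder of the rejection sampling on its own (replacing further encryption queries one at a time with encryptions of a fixed message $m^\dagger$ through a standard hybrid via semantic security of $\E_0$), and feeds the result to $A$. Since the target event $D(c)=b^\star$ has probability $\geq 1/2$, conditioning loses only a constant factor, and any non-negligible advantage for $A$ becomes a non-negligible advantage for $A'$. The main obstacle is handling randomized $D$ cleanly: I would treat each rejection-sampling trial as an independent Bernoulli with the same bias (valid because $D$'s randomness is sampled fresh per trial), which makes the pigeonhole and concentration arguments go through.

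For the correctness violation, fix $\E$ and $\Pi$ as above and consider any honest execution of $\Pi^\E$ on input $\msg$. Subliminal indistinguishability implies the transcript $(c_\pr{b}{i})_{b,i}$ of $\Pi$-generated ciphertexts is computationally indistinguishable from the corresponding ideal transcript of fresh $\E.\Enc$ outputs; hence applying the PPT predicate $D$ coordinatewise returns $b^\star$ at every position except with negligible probability (otherwise a distinguisher could simply count positions where $D$ returns $b^\star$). The sequence of locally decoded bits is therefore the fixed vector $(b^\star,\dots,b^\star)$ with overwhelming probability, and so the receiver's output $\msg' = \Pi^\E_{\party{1},{\sf out}}(\sk_1,\state_\pr{1}{r})$ takes a fixed value depending only on the keys and $\Pi$'s randomness---in particular, independent of $\msg$. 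For any $\msg$ other than this fixed value (in particular for most $\msg \in \{0,1\}^\kappa$) we get $\Pr[\msg'=\msg] \leq 2^{-\kappa} + \negl(\kappa)$, violating the correctness requirement of Definition~\ref{def:subliminal} and establishing the theorem.
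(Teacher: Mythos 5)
Your proposal is correct and follows essentially the same route as the paper: construct the adversarial scheme by rejection-sampling a base semantically secure scheme until the local decoder outputs a fixed bit, observe that this efficient post-processing preserves semantic security, and use subliminal indistinguishability to conclude that the decoded bits in an honest execution are essentially constant, so correctness must fail. The differences are presentational only: the paper uses a constant number $t$ of trials and a WLOG over half the plaintexts where you use $\kappa$ trials and invoke semantic security to make the bias uniform over plaintexts, and your single-challenge reduction with the $m^\dagger$ replacement is a more roundabout rendering of the paper's observation that the modified encryption algorithm is simulatable given oracle access to $\E.\Enc$ (a standard hybrid over the rejection-sampling trials).
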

\begin{proof}
	Let us consider a locally decodable scheme such as in the statement of the
	theorem, and let us denote by $\Pi_{2, \sf out}:\zo^n\times\zo^m\to\zo$ the
	decoding function of the scheme where the second input consists of random
	bits (the public randomness) and the first input is a ciphertext $c$. Since
	we allow the encryption scheme to depend on the public randomness of the
	subliminal scheme, define the partial function $f_r(c) = \Pi_{1, \sf
	out}(c, r)$. $f_r$ is now a deterministic function of the ciphertext and we
	conclude the proof by constructing an encryption scheme which biases the
	output of $f_r$ arbitrarily close to a constant bit. This is
	a contradiction, since by correctness and subliminal indistinguishability,
	$f_r(c)$ should have negligible bias when subliminally communicating
	a uniformly random message $\msg\gets\{m_1,m_2\}$.

	Let $\E,$ be a semantically secure encryption scheme with ciphertext space
	$\Ciph=\zo^n$ and message space $\Msg$. Without loss of generality we
	assume that for at least half the messages $m\in\Msg$, we have
	$\Pr[f_r\big(\E.\Enc(pk,m)\big)=1]\geq\frac{1}{2}$ (otherwise we can just
	replace 1 by
	0 in the construction below). We now define the encryption scheme $\E'$
	  which is identical to $\E$ except for $\E'.\Enc$ which on input $(pk, m)$
	  runs as follows for some constant $t$.

		\begin{enumerate}
		 \item Repeat at most $t$ times:
			\begin{enumerate}
				\item Sample encryption $c\gets\E.\Enc(pk, m)$.
				\item If $f_r(c)=1$, exit the loop; otherwise, continue.
			\end{enumerate}
		\item Output $c$.
		\end{enumerate}

	It is clear that $\E'$ is also semantically secure: oracle access to
	$\E'.\Enc$ can be simulated with oracle access to $\E.\Enc$, so
	a distinguisher which breaks the semantic security of $\E'$ can also be
	used to break the semantic security of $\E$. Finally, for a message $m$
	such that $\Pr[f_r(C_m)=1]\geq\frac{1}{2}$, by definition of
	$\E'.\Enc$, it holds that 
	\begin{displaymath}
		\Pr[f_r(\E'.\Enc(PK, m))=1]\geq1-\frac{1}{2^t}\ .
	\end{displaymath}
	This shows that the output of $f_r$ can be arbitrarily biased and
	concludes the proof.
\end{proof}

\begin{remark}
	The essence of the above theorem is the impossibility of deterministic extraction: no single deterministic function can deterministically extract from ciphertexts of arbitrary encryption schemes. The way to bypass this impossibility is to have the extractor depend on the encryption scheme. Note that multiple-source extraction, which is used in our constructions in the subsequent sections, implicitly do depend on the underlying encryption scheme, since the additional sources of input depend on the encryption scheme and thus can be thought of as ``auxiliary input'' that is specific to the encryption scheme at hand.
\end{remark}

\subsection{Steganography for Adversarial Cover Distributions}
\label{sec:imposs-stego}

Our second impossibility result concerns a much more general class of
communication schemes, which we call \emph{steganographic communication schemes}.
Subliminal communication schemes, as well as the existing notions of public-key steganography and steganographic key exchange from the steganography literature, are instantiations of the more general definition of a (multi-message) steganographic communication scheme.
To our knowledge, the general notion of a steganographic communication scheme has not been formalized in this way in prior work. In the context of this work, the general definition is helpful for proving broad impossibilities across multiple types of steganographic schemes.

As mentioned in the introduction, a limitation of all existing results in the
steganographic literature, to our knowledge, is that they assume that the \emph{cover distribution} --- \emph{i.e.}, the distribution of innocuous objects in which steganographic communication is to be embedded --- is
fixed \emph{a priori}. 
In particular, the cover distribution is assumed not to depend on the description
of the steganographic communication scheme. The impossibility result given in Section~\ref{sec:imposs-subl} is an example illustrative of the power of adversarially choosing the cover distribution: Theorem \ref{thm:impossibility-1} says that by choosing the encryption scheme $\E$ to depend on a given subliminal communication scheme, an adversary can rule out the possibility of any hidden communication at all.

Our next impossibility result (Theorem~\ref{thm:impossibility-2})
shows that if the cover distribution is chosen adversarially, then non-trivial 
steganographic communication is impossible.


\begin{theorem}
\label{thm:impossibility-2}
Let $\Pi$ be a steganographic communication scheme. Then for any $k\in\NN$,
	there exists a cover distribution $\cC$ of conditional min-entropy $k$ such
	that the steganographic indistinguishability of $\Pi$ does not hold for
	more than one message.
\end{theorem}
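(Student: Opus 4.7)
The plan is to adapt the argument of Theorem~\ref{thm:impossibility-1} to the more general setting in which the adversary directly specifies the cover distribution. The key insight is an analog of the impossibility of deterministic extraction: for any fixed scheme $\Pi$, the receiver's decoder is a fixed function of the transcript, and we can exhibit a high-conditional-min-entropy cover distribution on which this decoder is essentially constant.

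First I would fix $\Pi$ and identify its receiver-side decoding procedure $D_\Pi$ that recovers the hidden message from the transcript (together with any public parameters the indistinguishability adversary is given). Let $N$ denote the number of possible hidden messages. Starting from a base cover distribution $\cC_0$ of conditional min-entropy at least $k+\log N$, I would apply rejection sampling history-by-history: given a history $h$, accept a sampled cover $c\leftarrow\cC_0(\cdot\mid h)$ only if the transcript obtained by honest continuation of the protocol would be decoded by $D_\Pi$ to a specific target message $\msg^\star(h)$. A pigeonhole argument over the $N$ possible outputs of $D_\Pi$ guarantees the existence of a target $\msg^\star(h)$ accepted with probability at least $1/N$, so the conditioning loses at most $\log N$ bits of conditional min-entropy. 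By an inductive argument across the rounds, the resulting cover distribution $\cC$ has conditional min-entropy at least $k$, and an ideal-world transcript drawn from $\cC$ is decoded by $D_\Pi$ to a single identifiable $\msg^\star$ with overwhelming probability.

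With $\cC$ in hand, the distinguisher for steganographic indistinguishability against any hidden message $\msg\neq\msg^\star$ simply runs $D_\Pi$ on the observed transcript and outputs $1$ iff the decoded message equals $\msg$. Under the ${\sf Subliminal}$ distribution the output is $1$ with overwhelming probability by the correctness of $\Pi$, while under ${\sf Ideal}$ with cover distribution $\cC$ the decoded message is $\msg^\star\neq\msg$, so the output is $1$ only with negligible probability. Hence indistinguishability fails for every $\msg\neq\msg^\star$, leaving at most the single $\msg^\star$ on which $\Pi$ can achieve steganographic indistinguishability.

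The main obstacle I anticipate is making the rejection-sampling construction compatible with the structure of multi-round protocols: rejecting entire transcripts yields a joint distribution that may not decompose cleanly into per-round conditionals, whereas rejecting round-by-round forces $\msg^\star(h)$ to depend on the history and complicates arguing that a single fixed $\msg^\star$ is reached almost surely. Choosing $\msg^\star(h)$ inductively (\emph{e.g.}, as the plurality decoded message conditioned on $h$ under honest continuation), tracking the accumulated entropy loss across rounds, and ensuring the adversary's distinguisher need only \emph{check} whether the decoded message equals $\msg$ (rather than predict $\msg^\star$ a priori) are the technical crux.
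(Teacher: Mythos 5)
Your high-level target---an adversarial cover distribution with per-block conditional min-entropy $k$ on which the receiver's decoder $\Pi^\cC_{\party{1},{\sf out}}$ is essentially constant---would indeed yield the theorem (and in fact the meaningful form of it, failure of indistinguishability for message distributions supported on two messages, not just point masses; for point masses the claim is nearly vacuous, since under \emph{any} cover distribution at most one message can be decoded with probability above $\tfrac12$). The genuine gap is that the construction of this distribution, which is the entire technical content of the theorem, is never actually carried out. Conditioning the whole joint distribution on ``decode $=\msg^\star$'' preserves \emph{joint} min-entropy up to $\log N$ but can annihilate the conditional min-entropy of an individual block given its history, which is the quantity the theorem quantifies over; you notice this and retreat to history-by-history rejection sampling, but there the acceptance predicate ``the honest continuation would decode to $\msg^\star(h)$'' is not an event determined by the current block (the continuation is random and the decoder may be randomized), the pigeonhole bound of $1/N$ is both unproved for any well-defined surrogate event and useless when $N=2^\kappa$, and no argument is given that the per-round conditionings compose so that a single \emph{fixed} $\msg^\star$ is hit with probability close to $1$ while each block loses only a bounded amount of conditional min-entropy. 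The ``technical crux'' you flag in your last paragraph is exactly the missing proof, not a detail.

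The paper closes precisely this gap by reducing to a known result: the impossibility of extracting a nearly unbiased bit from Santha--Vazirani sources (Proposition 6.6 in \cite{vad10}). It suffices to fix two messages $m_1\neq m_2$ and consider the boolean predicate $f=\bo[\Pi^\cC_{\party{1},{\sf out}}(\cdot)=m_1]$; the SV result supplies, for any constant $\delta>0$, a $\delta$-SV source $\cC_{2r-1}$ on which $f$ equals a fixed value with probability at least $1-\delta$, and the greedy bit-by-bit tilting underlying that result is the disciplined version of your history-by-history idea: each next bit is biased (never below probability $\delta$) toward the subtree with the more extreme conditional decoding probability, so every block of $n/(2r-1)$ bits retains conditional min-entropy at least $\frac{n}{2r-1}\log\frac{1}{1-\delta}$, which exceeds $k$ once $n$ is large enough. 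On the other side, correctness plus steganographic indistinguishability for $\msg\gets\{m_1,m_2\}$ forces $f(\cC_{2r-1})$ to have only negligible bias, giving the contradiction for $\delta<\tfrac12-\eps$. To salvage your route you would have to replace the rejection-sampling step by such a tilting/martingale argument (or invoke the SV-source lemma directly); as written, the proposal assumes its hardest step.
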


In Appendix~\ref{a:imposs}, we give the formal definition of a \emph{steganographic communication scheme}, along with the proof of Theorem~\ref{thm:impossibility-2}. We have elected to present these in the appendix as the definition introduces a set of new notation only used for the corresponding impossibility result, and both the definition and the impossibility result are somewhat tangential to the main results of this work, whose focus is on subliminal communication schemes.

\section{Construction of the Subliminal Scheme}
\label{sec:const}
	
The goal of this section is to establish the following theorem, which
states that our construction $(\OurPhi,\OurXi)$ is a subliminal communication scheme when instantiated with a pseudorandom key-exchange protocol (such as Diffie-Hellman).

\begin{theorem}
\label{thm:subl}
	The protocol 
	$(\OurPhi,\OurXi)$ given in Definition~\ref{def:full-protocol},
	when instantiated with a pseudorandom key-exchange protocol $\Lambda$,
	is a multi-message subliminal communication scheme.
\end{theorem}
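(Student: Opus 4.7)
The plan is to establish Theorem~\ref{thm:subl} by arguing correctness and subliminal indistinguishability separately for the two-phase protocol $(\OurPhi,\OurXi)$, invoking the lemmas already proved in Sections~\ref{sec:establish-seed} and~\ref{sec:overt} about the individual sub-protocols, and then chaining them through a hybrid argument.

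For correctness, I would first invoke the analysis of $\OurPhi$ from Section~\ref{sec:establish-seed}: the setup rounds consist of honest encryptions of messages drawn from $\cM$, and the parties locally apply $\Gt$ to consecutive pairs of observed ciphertexts. By the same-source-extractor guarantee of $\Gt$ on sources of min-entropy $\omega(\log\kappa)$, together with the fact that semantic security of $\E$ implies each pair of consecutive ciphertexts is computationally indistinguishable from two i.i.d.\ encryptions of $0$, the bits extracted by the two parties agree and are statistically close to uniform, so at the end of $\OurPhi$ both $P_0$ and $P_1$ hold the same string $s$ which is (from the extractor's point of view) close to uniformly random. I would then turn to $\OurXi$: each invocation uses $s$ as the seed of a strong seeded extractor $\Ext$ to embed random strings of the sender's choice via rejection sampling, and these random strings encode the transcript of the pseudorandom key-exchange protocol $\Lambda$ followed by ciphertexts of a pseudorandom SKE keyed by the exchanged secret. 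Correctness of $\Lambda$, correctness of the SKE, and the fact that $\Ext$ is deterministic on the shared seed $s$ together imply that $P_R$ recovers $\msg$ with overwhelming probability.

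For subliminal indistinguishability I would use a hybrid argument between ${\sf Subliminal}_{\OurPhi,\OurXi}$ and ${\sf Ideal}$. The first $r$ rounds (the setup phase) are \emph{syntactically} encryptions of $\cM$-samples under $\E$, so these rounds are already identically distributed in the two experiments and no transition is needed. The interesting transitions concern each invocation of $\OurXi$. In Hybrid 1 I would, round-by-round, replace the rejection-sampled ciphertext (produced by resampling $\E.\Enc$ until $\Ext(s,c)$ matches the bit-string the sender wishes to embed) by a single fresh encryption. This step is justified by the strong-seeded-extractor property applied to ciphertexts viewed as a HILL-entropy source (the HILL entropy comes from semantic security of $\E$, as discussed in the overview): for any fixed target string $t$, the distribution of accepted $c$ is statistically close to that of a fresh $c\gets\E.\Enc(\pk,m)$, by a standard ``extractors imply near-uniform rejection'' argument summed over $t$. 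In Hybrid 2 I would replace the embedded target strings themselves (which in the real protocol are $\Lambda$-messages followed by SKE ciphertexts under the established key) by uniformly random strings; this step follows from the pseudorandomness guarantees of $\Lambda$ and of the pseudorandom SKE, invoked in the order of the protocol. After Hybrid 2 the joint distribution is identical to ${\sf Ideal}$.

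The main obstacle is Hybrid 1: getting a negligible statistical error while relying only on the \emph{computational} guarantee provided by semantic security. Because the ciphertexts themselves may have arbitrarily low real min-entropy, I cannot directly feed them to a strong seeded extractor and obtain statistical closeness; instead I need to invoke semantic security to pass to a HILL-entropy source (indistinguishable from encryptions of $0$, which do have sufficient min-entropy by a standard argument), argue statistical closeness on that pseudo-source, and then pass back. A further subtlety is that semantic security must be invoked against the \emph{extractor}/distinguisher and not against the adversary (who holds $\sk$); the experiment in Definition~\ref{def:mm-subliminal} gives $\sk$ to the distinguisher, so this reduction requires carefully showing that the reduction to semantic security can be carried out by an adversary that does not know $\sk$, exactly as discussed in the overview. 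Handling this correctly, and threading it through all $pr'$ communication rounds without polynomial loss in the extractor error, is the main technical content of the argument.
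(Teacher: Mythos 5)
Your plan has the right ingredients (extractor-based seed, rejection sampling, pseudorandomness of $\Lambda$ and $\SKE$), but the hybrid ordering is broken at its central step. Your Hybrid 1 replaces each rejection-sampled ciphertext by a fresh encryption \emph{while the embedded targets are still the actual $\Lambda$-messages and $\SKE$-ciphertexts, and while the seed is still the string derived from the setup ciphertexts}. That transition is not justified, and the stated justification is false: for a \emph{fixed} target $t$, the accepted ciphertext satisfies $\Ext(S,c)=t$ with probability $1$, whereas a fresh encryption does so with probability about $2^{-v}$, and the distinguisher can recompute $S$ from the (public, unmodified) seed-establishment ciphertexts and test exactly this; so ``for any fixed target string $t$ the accepted $c$ is statistically close to a fresh $c$'' is wrong, and summing over $t$ only helps when $t$ is uniform and independent of the rest of the view. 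Closeness of rejection-sampled to honest ciphertexts holds only \emph{jointly} over a uniform seed and uniform embedded strings --- this is precisely Theorem~\ref{thm:ext-scheme} and Theorem~\ref{lem:comp}. Consequently the hybrids must run in the opposite order, as in the paper: first replace the derived seed by a truly uniform string (Theorem~\ref{thm:ext}; a hybrid your plan omits entirely), then replace the embedded key-exchange messages and $\SKE$-ciphertexts by uniform strings using pseudorandomness of $\Lambda$ (note one needs $(T,K)\compIndist(U,K_\$)$ since later protocol messages depend on $\sk^*$) and of $\SKE$, and only at the end replace rejection sampling with honest encryption via Theorem~\ref{lem:comp}. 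Note also that in Definition~\ref{def:full-protocol} the key exchange is embedded during $\OurPhi$, so your claim that the entire setup phase consists of unmodified encryptions ``needing no transition'' is only true of the seed-establishment rounds.

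Separately, the ``main obstacle'' you identify is a misdiagnosis for this theorem. Lemma~\ref{lemma:cond-min-ent} shows that ciphertexts of \emph{any} semantically secure scheme (encrypting any message, not just $0$) have true min-entropy $\omega(\log\kappa)$ with overwhelming probability over the keys; since the extractor output is only $O(\log\kappa)$ bits, the rejection-sampler guarantee of Theorem~\ref{thm:ext-scheme} is a purely statistical statement that holds even given $(\pk,\sk)$, so no HILL-entropy detour and no delicate ``reduction without $\sk$'' is needed there (HILL entropy is used only in the succinct-scheme variant, Theorem~\ref{thm:cc}). The place where semantic security against the \emph{extractor} genuinely enters is the seed-establishment analysis (Lemma~\ref{lemma:2ext-comp} and Theorem~\ref{thm:ext}), where the paired ciphertexts encrypt dependent messages and must be argued indistinguishable from i.i.d.\ --- exactly the part your indistinguishability argument waves off as requiring no transition.
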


The detailed description and proofs of security and correctness of our scheme
can be found in the following subsections.  Our construction makes no
assumption on the message distribution $\cM$ and in particular holds when the
exchanged plaintexts (of the adversarially mandated encryption scheme $\E$) are a fixed, adversarially chosen sequence of messages.
An informal outline of the construction is given next.

\begin{definition}\label{def:outline} Outline of the construction.
\begin{enumerate}
\item {\bf Setup Phase} $\OurPhi$
	\begin{enumerate}
		\item\label{itm:seed} 
			A $\widetilde O(\log \kappa)$-bit string $S$ is established between
			$P_0$ and $P_1$ by extracting randomness from pairs of consecutive
			ciphertexts.  (\emph{Protocol overview in Section
			\ref{sec:establish-seed}.})

		\item\label{itm:KE}
			Let $\Ext$ be a strong seeded extractor, and let $S$ serve as its seed. By rejection-sampling
			ciphertexts $c$ until $\Ext_S(c)=\str$, either party can
			embed a random string $\str$ \emph{of their choice} in the
			conversation. (\emph{Protocol overview in Section \ref{sec:overt}.})
			By embedding in this manner 
			the messages of a pseudorandom key-exchange protocol, 
			both parties establish a shared secret $\sk^*$.\footnote{Note that the random string $\str$ is known to an eavesdropper who has knowledge of the seed $S$. Nonetheless, (1) the established secret $\sk^*$ is unknown to the eavesdropper by the security of the key-exchange protocol and (2) the transcript is indistinguishable to the eavesdropper from one in which no key exchange occurred at all, due to the pseudorandomness of the key-exchange messages. \label{ft:overt}}
	\end{enumerate}

\item {\bf Communication Phase} $\OurXi$ \label{itm:comm} \\
	Both parties can now communicate arbitrary messages of their choice by
	\emph{(1)} encrypting them using a pseudorandom secret-key encryption scheme $\SKE$ using $\sk^*$ as the secret key, and \emph{(2)} embedding
	the ciphertexts of $\SKE$ using the rejection-sampling technique described in Step \ref{itm:KE}.\footnote{Again, an eavesdropper could know the $\SKE$ ciphertexts exchanged, if he knew the seed $S$, but could not distinguish the $\SKE$ ciphertexts from truly random strings, and thus could not tell whether any subliminal communication was occuring at all. \emph{Cf.} footnote \ref{ft:overt}.}
	(\emph{Detailed protocol in Section \ref{sec:full-protocol}.})
\end{enumerate}
The full protocol is given, and proven to be a subliminal communication scheme, in Section~\ref{sec:full-protocol}.
\end{definition}

\subsection{Establishing a Shared Seed}
\label{sec:establish-seed}

In this section, we give a protocol which allows $P_0$ and $P_1$ to establish
a random public parameter which will be used in subsequent phases of our subliminal
scheme. As such, this can be thought of as drawing a subliminal scheme at
random from a family of subliminal schemes. The parameter is public in the
sense that anyone eavesdropping on the channel between $P_0$ and $P_1$ gains
knowledge of it. A crucial point is that the random draw occurs \emph{after}
the adversarial encryption scheme $\E$ is fixed, thus bypassing the
impossibility results of Section~\ref{sec:imposs}.

Our strategy is simple: extract randomness from pairs of ciphertexts. Since
the extractor does not receive the key, semantic security holds with respect 
to the extractor: a pair of ciphertexts for two arbitrary messages is
indistinguishable from two encryptions of a fixed message; thus, a 
same-source extractor suffices for our purposes (see
Lemma~\ref{lemma:2ext-comp}). 
Even though semantic security guarantees only $\omega(\log \kappa)$ min-entropy
of ciphertexts (see Lemma \ref{lemma:cond-min-ent}), 
we will be able to make use of the ``greater-than'' extractor 
(Definition \ref{def:gt}) applied to pairs of ciphertexts, 
and obtain Theorem~\ref{thm:ext}.

\begin{definition}\label{def:gt}
	The \emph{greater-than extractor} $\Gt$ is defined by $\Gt(x, y) \defeq \bo[x\geq y]$.
\end{definition}

\begin{lemma}[Ciphertexts have super-logarithmic min-entropy]
	\label{lemma:cond-min-ent}
	Let $\PKE$ be a semantically secure encryption scheme. Then there exists
	a negligible function $\eps$ such that for all $\kappa\in\NN$,
	$m\in\Msg_{\kappa}$, writing $C^{pk}_m\sim\PKE.\Enc(pk,m)$:
	\begin{displaymath}
		\Pr\left[(pk, sk)\gets\PKE.\Gen(1^\kappa):
		\Me\big(C_m^{pk}\big)\geq\log\frac{1}{\eps(\kappa)}\right]
		\geq 1-\eps(\kappa)\;.
	\end{displaymath}
\end{lemma}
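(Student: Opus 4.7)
The plan is to prove the contrapositive: if $C^{pk}_m$ fails to have super-logarithmic min-entropy with overwhelming probability (over $pk$) for some fixed message $m$, then $\PKE$ cannot be semantically secure. Concretely, suppose for contradiction that there exist a polynomial $p$, an infinite set of security parameters $K\subseteq\NN$, and messages $\{m_\kappa\}_{\kappa\in K}$ with
$$\Pr_{(pk,sk)\gets\PKE.\Gen(1^\kappa)}\bigl[\Me(C^{pk}_{m_\kappa}) < \log p(\kappa)\bigr] > 1/p(\kappa).$$
Call such a $pk$ \emph{bad}; every bad $pk$ admits a ``heavy'' ciphertext $c^*$ with $\Pr[\PKE.\Enc(pk,m_\kappa)=c^*]\geq 1/p(\kappa)$.

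I would then build a non-uniform PPT adversary $\cA$ against semantic security, with $(m_\kappa)_{\kappa\in K}$ hardwired. On input $pk$, $\cA$ draws $N=\kappa\cdot p(\kappa)^2$ fresh samples $c_1,\dots,c_N\gets\PKE.\Enc(pk,m_\kappa)$, lets $c^*$ be the empirical mode, and submits the challenge pair $(m_\kappa,m')$ for an arbitrary fixed $m'\neq m_\kappa$. Upon receiving the challenge $c$, $\cA$ outputs $0$ if $c=c^*$ and $1$ otherwise.

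The analysis has two ingredients. First, a standard Chernoff/union-bound argument ensures that, conditioned on $pk$ being bad, the empirical mode $c^*$ chosen by $\cA$ satisfies $\Pr[\PKE.\Enc(pk,m_\kappa)=c^*]\geq 1/(2p(\kappa))$ with all but negligible probability. Second, and this is the key step, correctness of $\PKE$ forces $c^*$ to have only negligible probability as an encryption of any $m'\neq m_\kappa$: since $c^*$ lies in the support of $\PKE.\Enc(pk,m_\kappa)$, correctness gives $\PKE.\Dec(sk,c^*)=m_\kappa$ with overwhelming probability over the key generation, so if $\Pr[\PKE.\Enc(pk,m')=c^*]$ were non-negligible, then $\PKE.\Dec(sk,\PKE.\Enc(pk,m'))$ would differ from $m'$ non-negligibly often, contradicting correctness. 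Combining both ingredients yields
$$\Pr[\cA=0\mid b=0] - \Pr[\cA=0\mid b=1] \;\geq\; \frac{1}{p(\kappa)}\cdot\frac{1}{2p(\kappa)} - \negl(\kappa),$$
a non-negligible advantage, which contradicts semantic security.

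The main subtlety, I expect, is averaging correctness over $(pk,sk)$ carefully so that the negligible correctness error never overwhelms the $1/\poly$ signal extracted from the heavy ciphertext $c^*$; this is routine but requires some attention. Apart from this, the proof is a direct application of the observation that a fixed ciphertext uniquely identifies its plaintext, turning any plaintext-specific heavy ciphertext into a distinguisher.
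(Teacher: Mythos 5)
Your proposal is correct in substance, but it takes a genuinely different route from the paper's proof. The paper also argues by contradiction with a non-negligible fraction of ``bad'' keys, but its distinguisher is much lighter: on challenge $(pk,c)$ it draws a \emph{single} fresh sample $e\gets\PKE.\Enc(pk,m_\kappa)$ and outputs $1$ iff $e=c$. When the challenge encrypts $m_\kappa$ this is a collision test, and low min-entropy gives collision probability at least $1/p(\kappa)^2$, hence overall success at least $1/p(\kappa)^3$; when the challenge encrypts $m'\neq m_\kappa$, correctness forces $\Pr[\Enc(pk,m_\kappa)=\Enc(pk,m')]$ to be negligible. So correctness plays exactly the same role in both arguments (a ciphertext that is likely under $m_\kappa$ cannot also be likely under $m'$), but the paper never needs to identify the heavy ciphertext, so there is no Chernoff estimation at all. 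Your mode-estimation approach buys a larger conditional advantage ($\approx 1/(2p^2)$ versus $1/p^3$) at the cost of two points that deserve care: (i) the ``standard Chernoff/union-bound'' step is not a naive union bound, since there may be exponentially many light ciphertexts; one has to bound the expected number of light ciphertexts whose empirical count beats that of the heavy one (e.g., via $\sum_c \binom{N}{t}q_c^t$), which works but should be said; and (ii) your claim that correctness pins down $\Dec(sk,c^*)=m_\kappa$ ``since $c^*$ lies in the support'' is too weak as stated --- a negligible correctness error says nothing about individual low-probability support elements --- but it is rescued by the fact that your $c^*$ is \emph{heavy} (probability $\geq 1/(2p)$ under $\Enc(pk,m_\kappa)$), so a wrong decryption of $c^*$ on a non-negligible fraction of bad keys would already contradict correctness. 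With those two points tightened, your argument is a valid, if heavier, alternative to the paper's collision-based proof.
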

\begin{proof}
	In Appendix~\ref{a:me}.
\end{proof}

Given that ciphertexts of semantically secure encryption schemes have
min-entropy $\omega(\log \kappa)$, we will consider extractors which have
negligible bias on such sources. This motivates the following definition.

\begin{definition}
	Let $\TwoExt: \zo^{n} \times \zo^{m} \to\zo^{\ell}$ be a two-source
	extractor, we say that $\TwoExt$ is an \emph{extractor for
	super-logarithmic min-entropy} if $\TwoExt$ is a $(d\log \kappa, d\log
	\kappa, \frac{1}{\kappa^d})$ extractor for any $d\in\NN$. In
	particular, for any negligible function $\eps$, there exists a negligible
	function $\eps'$ such that $\TwoExt$ is a $(\log\frac{1}{\eps},
	\log\frac{1}{\eps}, \eps')$
	extractor.
\end{definition}

The following lemma shows that the output of a same-source extractor for
super-logarithmic min-entropy on two ciphertexts is statistically
indistinguishable from uniform, even in the presence of the key.

\begin{lemma}
	\label{lemma:2ext-comp}
	Let $\PKE$ be a semantically secure encryption scheme with ciphertext
	length $n$, and let $\TwoExt:\zo^n\times\zo^n\to\zo^\ell$ be a same-source
	extractor for super-logarithmic min-entropy with
	$\ell(\kappa)=O(\log\kappa)$, then there exists a negligible function
	$\eps$ such that, for any $\kappa\in\NN$, $(m_0, m_1)\in\Msg_{\kappa}^2$,
	writing $(PK, SK)\sim\PKE.\Gen(1^\kappa)$, $C^{pk}_i\sim\PKE.\Enc(pk,
	m_i)$, $i\in\zo$:
	\begin{displaymath}
		\sdist{\big(PK, SK,
		\TwoExt(C^{PK}_0, C^{PK}_1)\big)
		-(PK, SK, U_{\ell(\kappa)})}
		\leq \eps(\kappa)\;.
	\end{displaymath}
\end{lemma}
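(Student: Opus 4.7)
My plan is a hybrid argument that bridges between the semantic-security side (where only computational indistinguishability is available) and the same-source-extractor side (which gives statistical closeness to uniform). I would introduce the intermediate distribution in which the second ciphertext $C_1^{PK}$ is replaced by a fresh independent encryption $C_0'^{PK} \sim \PKE.\Enc(PK, m_0)$; write $Z' := \TwoExt(C_0^{PK}, C_0'^{PK})$ and $Z := \TwoExt(C_0^{PK}, C_1^{PK})$. I would then decompose the target statistical distance $\sdist{(PK, SK, Z) - (PK, SK, U_\ell)}$ via the triangle inequality into a ``same-source'' piece $\sdist{(PK, SK, Z') - (PK, SK, U_\ell)}$ and a ``bridging'' piece $\sdist{(PK, SK, Z') - (PK, SK, Z)}$, and bound each separately.

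For the same-source piece, Lemma~\ref{lemma:cond-min-ent} guarantees that for all but a negligible fraction of $PK$, $\PKE.\Enc(pk, m_0)$ has min-entropy $\omega(\log\kappa)$; conditioned on such $PK$, the pair $(C_0^{PK}, C_0'^{PK})$ is i.i.d.\ with super-logarithmic min-entropy, so the assumed same-source extractor property directly yields $\sdist{Z' \mid PK = pk - U_\ell} \leq \negl(\kappa)$. Since $SK$ is conditionally independent of $Z'$ given $PK$ (they depend on disjoint random coins of $\PKE.\Gen$ and of the encryption algorithm), averaging over $PK$ and $SK$ yields the desired bound for this piece.

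For the bridging piece, the same conditional-independence argument reduces it to bounding $\sdist{(PK, Z') - (PK, Z)}$ (with $SK$ removed). Here I would invoke semantic security to obtain $(PK, C_0^{PK}, \PKE.\Enc(PK, m_0)) \compIndist (PK, C_0^{PK}, \PKE.\Enc(PK, m_1))$, via a reduction that needs only $PK$ (not $SK$) to sample $C_0^{PK}$ itself; composing with the efficient function $\TwoExt$ then gives $(PK, Z') \compIndist (PK, Z)$.

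The main obstacle is the final step: the lemma demands a \emph{statistical} bound, whereas the reduction only delivers computational indistinguishability. To close this gap I would exploit the short output length $\ell = O(\log\kappa)$, so that the extractor range has polynomial size $2^\ell = \poly(\kappa)$. If $\sdist{(PK, Z') - (PK, Z)}$ were non-negligible, I would construct a PPT distinguisher that, on input $(pk, z)$, draws polynomially many samples of the form $(\PKE.\Enc(pk, m_0), \PKE.\Enc(pk, m_1))$ and $(\PKE.\Enc(pk, m_0), \PKE.\Enc(pk, m_0))$, applies $\TwoExt$ to each, empirically estimates the two conditional distributions on $\zo^\ell$, and then applies the empirically optimal threshold test to $z$. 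Because the output domain is polynomial size, each of the $2^\ell$ empirical probabilities can be estimated to inverse-polynomial additive accuracy in PPT, so the distinguisher's advantage asymptotically matches $\sdist{(PK, Z') - (PK, Z)}$ averaged over $PK$, which equals the joint distance by the same conditional-independence calculation. This would contradict the computational indistinguishability established in paragraph three, forcing the bridging piece to be negligible and completing the proof.
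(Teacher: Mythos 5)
Your proof is correct, but it implements the crucial step differently from the paper. The paper does not go through an explicit triangle-inequality decomposition followed by a generic ``statistical distance over a small range is efficiently testable'' argument; instead it assumes the conclusion fails and builds, in one shot, a \emph{collision-test} distinguisher against semantic security: on challenge $(pk,c)$ it samples $c_0,c_0'\gets\PKE.\Enc(pk,m_0)$, $c_1\gets\PKE.\Enc(pk,m_1)$ and outputs $1$ iff $\TwoExt(c_0,c_1)=\TwoExt(c_0',c)$. When $c$ encrypts $m_1$ the test sees two independent samples of $\TwoExt(C_0,C_1)$, whose collision probability exceeds $2^{-\ell}$ by roughly $2^{-\ell}\delta^2$ if that output is $\delta$-far from uniform; when $c$ encrypts $m_0$ the same-source extractor guarantee (via Lemma~\ref{lemma:cond-min-ent}) keeps the collision probability near $2^{-\ell}$. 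Since $2^{\ell}=\poly(\kappa)$, the gap is non-negligible, contradicting semantic security. Your route replaces the collision test by empirically learning the two conditional distributions over the $\poly(\kappa)$-size range and applying the (approximately) optimal likelihood test, and it isolates the purely statistical part (i.i.d.\ encryptions of $m_0$ plus Lemma~\ref{lemma:cond-min-ent}) as a separate hybrid. Both arguments hinge on the same two facts --- semantic security holds against the extractor, and $\ell=O(\log\kappa)$ makes distance from uniform efficiently detectable --- but they trade off differently: the paper's collision test uses a constant number of samples and an elementary collision-probability calculation at the cost of a quadratic loss in advantage (about $\delta^2/2^{\ell}$), while your learner recovers advantage essentially equal to the statistical distance at the cost of polynomially many samples and a union bound over the $2^{\ell}$ outcomes. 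Minor points to tighten if you write it up: state explicitly that $\TwoExt$ is polynomial-time computable (needed for your reduction, as for the paper's), and handle the quantifiers so that the final negligible function does not depend on the pair $(m_0,m_1)$ --- the same care the paper's contradiction argument needs.
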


\begin{proof}
We will prove that for any polynomial $p$ and for large enough $\kappa$:
	\begin{displaymath}
		\Pr\left[
\begin{array}{lll}
	(pk, sk)\gets\PKE.\Gen(1^\kappa)
&:&
	\sdist{\TwoExt(C_0^{pk}, C_1^{pk}) - U_{\ell(\kappa)}}
\end{array}
		\leq \frac{1}{p(\kappa)}\right]
		\geq 1-\frac{1}{p(\kappa)}\;.
	\end{displaymath}
Assume by contradiction that there exists $p\in\poly(\kappa)$ and an infinite
	set $I\subseteq\NN$ such that for $\kappa\in I$:
	\begin{equation}
		\label{eq:2ext}
		\Pr\left[
\begin{array}{lll}
	(pk, sk)\gets\PKE.\Gen(1^\kappa)
&:&
\sdist{\TwoExt(C_0^{pk}, C_1^{pk}) - U_{\ell(\kappa)}} 
\end{array}
		\geq \frac{1}{p(\kappa)}
\right]
		\geq \frac{1}{p(\kappa)}\;.
	\end{equation}

	We now construct an adversary $\sD$ distinguishing between
	$(PK, C_0^{PK})$ and $(PK, C_1^{PK})$ with non-negligible
	advantage. On input $(pk, c)$, $\sD$ runs as follows:
	\begin{enumerate}
		\item Sample two encryptions  of $m_0$: $c_0,
			c_0'\stackrel{iid}{\gets} \PKE.\Enc(pk, m_0)$,
			and $c_1\gets\PKE.\Enc(pk, m_1)$.
		\item If $\TwoExt(c_0, c_1) = \TwoExt(c_0', c)$ output 1, otherwise
			output 0.
	\end{enumerate}

	First, note that on input $(PK, C_1^{PK})$, $\sD$ outputs 1 iff a collision
	occurs at step 2. By \eqref{eq:2ext}, with probability at least
	$\frac{1}{p}$ over the draw of the $(pk, sk)$, a collision occurs
	with probability at least $\frac{1}{2^\ell}+\frac{1}{2^\ell p^2}$.
	Otherwise, a collision occurs with probability at least $\frac{1}{2^\ell}$.
	Overall, for $\kappa\in I$:
	\begin{equation}
		\label{eq:2ext-foo}
		\Pr[\sD(PK, C_1^{PK})=1]\geq \frac{1}{2^\ell}
		+\frac{1}{2^\ell p^3}\;.
	\end{equation}

	By Lemma~\ref{lemma:cond-min-ent}, after conditioning on the event that
	$\Me(C_0^{pk})\geq\log q(\kappa)$, the guarantee of $\TwoExt$ applies to
	a pair of independent encryptions of $m_0$ under $pk$ and we obtain, for
	large enough $\kappa$:
	\begin{displaymath}
		\Pr\left[
\begin{array}{lll}
	(pk, sk)\gets\PKE.\Gen(1^\kappa)
&:&
	\sdist{\TwoExt(C_0^{pk}, C_0^{'pk}) - U_{\ell(\kappa)}}
\end{array}
		\leq \frac{1}{q(\kappa)}\right]
		\geq 1-\frac{1}{q(\kappa)}\;,
	\end{displaymath}
	This implies that for large enough $\kappa$:
	\begin{equation}
		\label{eq:2ext-bar}
		\Pr[\sD(PK, C_0^{PK})=1]\leq \frac{1}{2^\ell}+\frac{2}{q}\;.
	\end{equation}

	Together, \eqref{eq:2ext-foo} and \eqref{eq:2ext-bar} imply, after
	choosing $q= 2^{\ell+2}p^3\in\poly$, that for large enough
	$\kappa\in I$:
	\begin{displaymath}
		\Pr[\sD(PK, C_0^{PK})=1]
		-
		\Pr[\sD(PK, C_1^{PK})=1]
		\geq
		\frac{2}{q}\;.
	\end{displaymath}
	This contradicts the security of $\PKE$ and concludes the proof.
\end{proof}

Finally, we observe that the ``greater-than'' extractor is a same-source
extractor for super-logarithmic min-entropy (Lemma \ref{lemma:gt}). To the best
of our knowledge, this is a folklore fact which is for example mentioned in
\cite{BIW04}.

\begin{lemma}
\label{lemma:gt}
For any $k\leq n$, $\Gt$ is a $(k, \frac{1}{2^k})$ same-source extractor.
\end{lemma}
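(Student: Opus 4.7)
The plan is to exploit the symmetry of the i.i.d. pair $(X,Y)$: because the joint distribution is invariant under swapping the two coordinates, we have $\Pr[X>Y]=\Pr[Y>X]$, and combining this with the identity $\Pr[X>Y]+\Pr[Y>X]+\Pr[X=Y]=1$ gives
\[
\Pr[\Gt(X,Y)=1]\;=\;\Pr[X\geq Y]\;=\;\frac{1}{2}+\frac{\Pr[X=Y]}{2}.
\]
Since $\Gt$ outputs a single bit, the statistical distance between $\Gt(X,Y)$ and $U_1$ is exactly $\big|\Pr[\Gt(X,Y)=1]-\tfrac{1}{2}\big|=\Pr[X=Y]/2$, so the whole problem reduces to bounding the probability of a tie.

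The next step is to bound the tie probability by the collision probability of $X$. Because $X$ and $Y$ are identically distributed, $\Pr[X=Y]={\sf CP}(X)=\sum_x\Pr[X=x]^2$, and
\[
{\sf CP}(X)\;\leq\;\Big(\max_x\Pr[X=x]\Big)\cdot\sum_x\Pr[X=x]\;\leq\;2^{-k}\cdot 1\;=\;2^{-k},
\]
where the last inequality uses $\Me(X)\geq k$. Substituting back, $\sdist{\Gt(X,Y)-U_1}\leq 2^{-(k+1)}\leq 2^{-k}$, establishing the claim.

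There is essentially no obstacle in this argument: the bias of $\Gt$ is driven entirely by ties, and min-entropy directly controls the collision probability via the standard inequality ${\sf CP}(X)\leq 2^{-\Me(X)}$. The argument crucially uses that $X$ and $Y$ are identically distributed --- the symmetry step fails otherwise --- which is precisely why $\Gt$ qualifies only as a \emph{same-source} extractor (as noted preceding the lemma) rather than a general two-source extractor: for two arbitrary sources, the distributions could be arranged so that $\Pr[X\geq Y]$ is close to $0$ or $1$ even at high min-entropy.
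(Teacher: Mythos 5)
Your proof is correct and follows essentially the same route as the paper's: exploit the i.i.d.\ symmetry to show the bias of $\Gt$ equals half the tie probability $\Pr[X=Y]={\sf CP}(X)$, and then bound the collision probability by $2^{-k}$ via the min-entropy assumption. The paper merely compresses the symmetry and tie-probability steps into the single identity $2\Pr[X\geq Y]=1+{\sf CP}(X)$, so there is no substantive difference.
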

\begin{proof}
	In Appendix~\ref{a:gt}.
\end{proof}

We now conclude this section with a full description of our method for
establishing the public parameter $S$ introduced in Step \ref{itm:seed}.

\begin{theorem}
	\label{thm:ext}
	Let $\E$ be a semantically secure public-key encryption scheme
	and let $\rho\in\poly$. Define random variables as follows.
	\begin{itemize}
		\item For $b\in\zo$, let $K_b=(PK_b, SK_b)=\E.\Gen(1^\kappa)$.
		\item For $b\in\zo$ and $i\in[2\rho]$, let $C_\pr{b}{i} = \E.\Enc(PK_{1-b}, m_\pr{b}{i})$
			representing the
			ciphertexts exchanged between $P_0$ and $P_1$ during $2\rho$ exchange-rounds.
		\ifsubmission
		\item Let $S = \big(\Gt(C_\pr{0}{1},C_\pr{0}{2}),\Gt(C_\pr{1}{1},C_\pr{1}{2}),\dots, \Gt(C_\pr{1}{2\rho-1},C_\pr{1}{2\rho})\big)$.
		\else
		\item Let $S = \big(\Gt(C_\pr{0}{1},C_\pr{0}{2}),\Gt(C_\pr{1}{1},C_\pr{1}{2}),\dots, \Gt(C_\pr{0}{2\rho-1},C_\pr{0}{2\rho}),\Gt(C_\pr{1}{2\rho-1},C_\pr{1}{2\rho})\big)$.
			\fi
	\end{itemize}
	There exists a negligible function $\eps$ such that:
	\begin{displaymath}
		\sdist{(K_0,K_1, S)
		- (K_0, K_1, U_{2\rho})}\leq \eps\;.
	\end{displaymath}
\end{theorem}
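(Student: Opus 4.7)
The plan is to use a hybrid argument over the $2\rho$ bits of $S$. Define a sequence of distributions $H_0, H_1, \ldots, H_{2\rho}$, where $H_0 = (K_0, K_1, S)$ and, for $j \geq 1$, $H_j$ is obtained from $H_{j-1}$ by replacing the $j$-th bit of $S$ with a fresh uniform bit sampled independently of everything else. Then $H_{2\rho} = (K_0, K_1, U_{2\rho})$, and the goal reduces to bounding $\sdist{H_{j-1} - H_j}$ by a negligible function $\eps_1(\kappa)$ and concluding via the triangle inequality that $\sdist{H_0 - H_{2\rho}} \leq 2\rho \cdot \eps_1(\kappa)$, which is negligible since $\rho \in \poly$.

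For each transition $H_{j-1} \to H_j$, the $j$-th bit of $S$ has the form $\Gt(C_\pr{b}{2k-1}, C_\pr{b}{2k})$ for some $b \in \zo$ and $k \in [\rho]$. By Lemma \ref{lemma:gt}, $\Gt$ is a same-source extractor for super-logarithmic min-entropy (taking $k(\kappa) = \log(1/\eps)$ for any negligible $\eps$). Applying Lemma \ref{lemma:2ext-comp} with $\TwoExt = \Gt$ to the two independently-sampled ciphertexts $C_\pr{b}{2k-1}, C_\pr{b}{2k}$ of fixed messages $m_\pr{b}{2k-1}, m_\pr{b}{2k}$ under $PK_{1-b}$ yields that this bit is statistically close to $U_1$ given $(PK_{1-b}, SK_{1-b})$.

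To upgrade this to the hybrid step, I would use the fact that each ciphertext is produced with fresh, independent encryption randomness. Thus, conditioned on the keys $K_0, K_1$ and the (fixed) sequence of messages, the $2\rho$ pairs of ciphertexts from which $S$ is extracted are mutually independent. Consequently, the pair $(C_\pr{b}{2k-1}, C_\pr{b}{2k})$ remains independent of all other pairs of ciphertexts and therefore of all other $\Gt$-outputs, so the guarantee of Lemma \ref{lemma:2ext-comp} continues to hold after further conditioning on the other ciphertexts, on $(PK_b, SK_b)$, and on the already-replaced uniform bits, yielding $\sdist{H_{j-1} - H_j} \leq \eps_1(\kappa)$. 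If the messages $m_\pr{b}{i}$ are themselves random rather than fixed, one averages this bound over their joint distribution.

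The main subtlety I expect is precisely this conditioning step: one must argue that peeling off the $j$-th bit via Lemma \ref{lemma:2ext-comp} is not corrupted by information about the other pairs or by the previously replaced uniform bits. This is handled cleanly by the independence of encryption randomness across rounds, which ensures that, given the keys and plaintexts, the extractor input for pair $j$ is independent of every other relevant random variable in the experiment. A minor bookkeeping point is that Lemma \ref{lemma:2ext-comp} is stated with respect to a single key pair $(PK, SK)$, whereas our setting involves two key pairs $K_0, K_1$; since pair $j$'s ciphertexts depend only on $PK_{1-b}$, the additional key pair can be treated as independent auxiliary information, preserving the bound.
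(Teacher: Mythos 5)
Your proposal is correct and takes essentially the same route as the paper: the paper's proof likewise decomposes $\sdist{(K_0,K_1,S)-(K_0,K_1,U_{2\rho})}$ into per-bit terms using the independence of the ciphertext pairs conditioned on the keys (your bit-by-bit hybrid is just an explicit rendering of that decomposition), and bounds each term via Lemma~\ref{lemma:2ext-comp} applied to $\Gt$, which is a same-source extractor for super-logarithmic min-entropy by Lemma~\ref{lemma:gt}. Your additional bookkeeping (conditioning on other pairs and already-replaced bits, averaging over the message distribution, and treating the second key pair as independent auxiliary information) simply spells out details the paper leaves implicit.
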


\begin{proof}
	Writing $S=(S_1, S_1',\dots, S_{\rho}, S'_{\rho})$, we have:
	\begin{align*}
		\sdist{(K_0,K_1, S)
		- (K_0, K_1, U_{2\rho})}
		\leq&\sum_{i=1}^\rho\big(\sdist{(K_0,K_1, _i) - (K_0, K_1, U_1)}\\
		&\quad\quad + \sdist{(K_0,K_1, S_i') - (K_0, K_1, S_1)}\big)\leq 2\rho\eps\;,
	\end{align*}
	where the first inequality follows by independence of the ciphertexts
	conditioned on the keys, and the second inequality follows by
	Lemma~\ref{lemma:2ext-comp}.
\end{proof}

\begin{remark}
	In the construction of Theorem~\ref{thm:ext}, the ciphertexts exchanged
	between $P_0$ and $P_1$ are sent without any modification, so subliminal
	indistinguishability clearly holds at this point.
\end{remark}

\subsection{Embedding Random Strings}
\label{sec:overt}

In this section, we assume that both parties have access to a public parameter
$S$ and construct a protocol which allows for embedding of
uniformly random strings into ciphertexts of an adversarially chosen encryption scheme $\E$,
as required by Steps~\ref{itm:KE} and \ref{itm:comm} of the construction outline (Definition~\ref{def:outline}). The security guarantee is
that for a uniformly random parameter $S$ and uniformly random strings to be embedded, the
ciphertexts of $\E$ with embedded random strings are
indistinguishable from ciphertexts of $\E$ produced by direct application of $\E.\Enc$, even to an adversary who knows the decryption keys of $\E$. 
This can be thought of as a relaxation of subliminal
indistinguishability (Definition \ref{def:subliminal})
where the two main differences are that \emph{(1)} the
parties have shared knowledge of a random seed, and \emph{(2)} indistinguishability only holds when
embedding a \emph{random} string, rather than for arbitrary strings. 
We first present a construction to embed
logarithmically many random bits (Theorem~\ref{thm:ext-scheme}) and then show
how to sequentially compose it to embed arbitrarily polynomially many random bits
(Theorem~\ref{lem:comp}). These constructions rely on a strong
seeded extractor that can extract logarithmically many bits from sources of
super-logarithmic min-entropy. Almost universal hashing is a simple
such extractor, as stated in Proposition~\ref{prop:auh}.

\begin{theorem}
	\label{thm:ext-scheme}
	Let $\Ext:\zo^d\times\zo^n\to\zo^v$ be a strong seeded extractor for
	super-logarithmic min-entropy with $v=O(\log \kappa)$, and let $\E$ be
	a semantically secure encryption scheme with ciphertext space $\cC=\zo^n$.
	Let $\Sigma^{\E,S}$ be defined as in Algorithm~\ref{algo:rs}, then the following guarantees hold:

	\begin{algorithm}[ht!]
		\caption{Rejection sampler $\Sigma^{\E,S}$}
		\label{algo:rs}
		\textsc{Public parameter:} $S$ (a $d$-bit seed). \\
		\textsc{Input:} $(\str,m,\pk)$ where $\str$ is the string to be embedded.
		\begin{enumerate}
			\item\label{itm:encrypt2} Generate encryption $c\gets\E.\Enc(\pk,m)$.
			\item If $\Ext(r, c)=\str$, then output $c$. Else, go back to step
				\ref{itm:encrypt2}.
		\end{enumerate}
	\end{algorithm}

	\begin{enumerate}
		\item \emph{Correctness:} for any $S\in\zo^d$ and $\str\in\zo^v$, if $c=\Sigma^{\E,
			S}(\str, m, \pk)$, and $\str'=\Ext(S, c)$, then $\str'=\str$.
		\item \emph{Security:} there exists a negligible function $\eps$ such that
		writing $ (PK, SK)=\E.\Gen(1^\kappa)$, $C=\E.\Enc(PK, m)$ and
		$C'=\Sigma^{\E, U_d}(U_v,m,PK)$, the following holds:
	\begin{displaymath}
		\sdist{(PK, SK, U_d, C) -(PK, SK, U_d, C')}
		\leq \eps(\kappa)\;.
	\end{displaymath}
	\end{enumerate}
\end{theorem}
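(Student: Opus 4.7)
The correctness claim is immediate from the definition of the rejection sampler: the loop in Algorithm~\ref{algo:rs} exits only when $\Ext(S, c) = \str$ holds, so the returned $c$ automatically satisfies $\str' = \Ext(S, c) = \str$.

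For the security claim, my plan is to compare the ``plain encryption'' and ``rejection-sampled'' distributions through an intermediate ideal experiment that augments the plain encryption with an explicit extractor output. Concretely, I would consider the joint distribution obtained by independently sampling $(PK, SK) \gets \E.\Gen(1^\kappa)$, $S \gets U_d$ and $C \gets \E.\Enc(PK, m)$, and then setting $R := \Ext(S, C)$; its marginal on $(PK, SK, S, C)$ is exactly the left-hand side of the security inequality. I would then observe that the rejection-sampling experiment produces the same joint distribution on $(PK, SK, S, \str, C')$ except with $R$ replaced by an independently uniform $\str \gets U_v$. The key point is that, conditioned on $(PK, SK, S)$ and on the value of the string coordinate, the conditional distribution of the ciphertext is the \emph{same} in both experiments: it is precisely $\E.\Enc(PK, m)$ conditioned on $\Ext(S, \cdot)$ taking that value. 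Hence by the data-processing property of statistical distance, it suffices to bound
\begin{displaymath}
\sdist{(PK, SK, S, \Ext(S, C)) - (PK, SK, S, U_v)}.
\end{displaymath}

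This final bound follows by combining Lemma~\ref{lemma:cond-min-ent} with the strong seeded extractor property: Lemma~\ref{lemma:cond-min-ent} guarantees that, with overwhelming probability over $(PK, SK)$, the ciphertext $C$ has super-logarithmic min-entropy, and the strong extractor guarantee of $\Ext$ (instantiated at the corresponding min-entropy threshold) then yields a negligible distance between $(S, \Ext(S, C))$ and $(S, U_v)$ conditioned on this good event; averaging over key generation absorbs the remaining negligible error. The main subtlety I expect to need to handle carefully is the event where some pair $(S, \str)$ satisfies $\Pr[\Ext(S, C) = \str] = 0$, in which case the rejection sampler would fail to terminate and $C'$ would be undefined; but the same extractor-based argument shows that this event has negligible probability over $(PK, SK, S)$, so it can be absorbed into the final negligible slack $\eps$ without affecting the conclusion.
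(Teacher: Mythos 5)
Your proposal is correct and follows essentially the same route as the paper's proof: your observation that both experiments share the same conditional ciphertext distribution given the extracted/targeted string is exactly the paper's identity $C \sim \Sigma^{\E,S}\big(\Ext(S,C''),m,PK\big)$ for an independent fresh encryption $C''$, after which both arguments reduce, via data processing, to the closeness of $(PK,SK,S,\Ext(S,C))$ and $(PK,SK,S,U_v)$, obtained from the strong-extractor guarantee together with Lemma~\ref{lemma:cond-min-ent}. Your explicit treatment of the zero-probability target (non-terminating sampler) is a minor point the paper leaves implicit, not a different approach.
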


\begin{proof}
	Define $C''=\E.\Enc(PK_2,m)$, an encryption of $m$ independent of $C$.
	By definition of rejection sampling, $C \sim \Sigma^{\E,S}\big(\Ext(S,
	C''),m, PK_2\big)$. Since $\Ext$ is a strong extractor for
	super-logarithmic min-entropy, and since $C$ has super-logarithmic
	min-entropy, there exists a negligible function $\eps$ such that:
	\begin{displaymath}
		\sdist{\big(PK_2, SK_2, U_d, \Ext(U_d, C'')\big)
		-(PK_2, SK_2, U_d, U_v)}
		\leq \eps(\kappa)\;.
	\end{displaymath}
	The statistical distance can only decrease by applying $\Sigma^{\E}$ on
	both sides, hence:
	\begin{displaymath}
		\sdist{(PK_2, SK_2, U_d, C)
		-(PK_2, SK_2, U_d, C')}
		\leq \eps(\kappa)\;.
	\end{displaymath}
	which proves the security guarantee. Correctness is immediate.
\end{proof}

\begin{remark}\label{rmk:not_inc}
Rejection sampling is a simple and natural approach that has
	been used by prior work in the steganographic literature, such as \cite{BC05}.
	Despite the shared use of this common technique, our construction is more
	different from prior art than it might seem at first glance. The novelty of
	our construction arises from the challenges of working in a model with
	a stronger adversary who can choose the distribution of ciphertexts
	(\emph{i.e.}, the adversary gets to choose the public-key
	encryption scheme $\E$).
	We manage to bypass the impossibilities outlined in
	Section \ref{sec:imposs} notwithstanding this stronger adversarial model, and in contrast to prior work, construct a protocol to established a shared seed from scratch, rather than simply assuming that one has been established in advance.
\end{remark}

We now sequentially compose 
Theorem~\ref{thm:ext-scheme} to embed longer strings.

\begin{theorem}
\label{lem:comp}
	Let $\Sigma$ be the rejection sampler defined in
	Algorithm~\ref{algo:rs}. Let $\ell\in\poly$ and $U_{\ell}$ be a uniformly random
	message of $\ell$ bits. For $v\leq\ell$, we write $U_{\ell} = U_{\ell, 1}||\dots ||
	U_{\ell, \nu}$ where $U_{\ell, i}$ is a block of $v$ bits from $U_\ell$ and
	$\nu = \frac{\ell}{v}$. Given cover
	messages $m_1, \dots, m_\ell$, define $(PK, SK)=\E.\Gen(1^\kappa)$, $C_i'
	= \Sigma^{\E, U_d}(U_{\ell, i}, m_i, \pk)$, $C_i = \E.\Enc(PK, m_i)$,
	then there exists a negigible function $\eps$ such that:
	\begin{displaymath}
		\sdist{(PK, SK, U_d, (C_i)_{i\in[\nu]}) -(PK, SK, U_d,
		(C_i')_{i\in[\nu]})} \leq \eps(\kappa)\;.
	\end{displaymath}
\end{theorem}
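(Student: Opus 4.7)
\medskip
\noindent\textbf{Proof plan.} The approach is a straightforward hybrid argument across the $\nu$ blocks, with Theorem~\ref{thm:ext-scheme} providing the per-block step. For $j\in\{0,1,\dots,\nu\}$, define the hybrid distribution $H_j$ to be
\begin{displaymath}
	H_j \;=\; \bigl(PK,\,SK,\,U_d,\,C'_1,\dots,C'_j,\,C_{j+1},\dots,C_\nu\bigr),
\end{displaymath}
where all ciphertexts share the same key pair $(PK,SK)\gets\E.\Gen(1^\kappa)$ and the same seed $U_d$, the rejection-sampled $C'_i = \Sigma^{\E,U_d}(U_{\ell,i},m_i,PK)$ use independent sampler randomness and the independent uniform blocks $U_{\ell,i}$, and the direct encryptions $C_i=\E.\Enc(PK,m_i)$ use independent encryption randomness. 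By construction, $H_0$ is the left-hand distribution and $H_\nu$ is the right-hand distribution of the theorem statement.

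For the hybrid step from $H_{j-1}$ to $H_j$, observe that only coordinate $j$ changes. The remaining coordinates ($PK$, $SK$, $U_d$, and the ciphertexts in positions $k\neq j$) can be produced as a randomized post-processing of the tuple $(PK, SK, U_d, X)$ where $X$ is either $C_j$ or $C'_j$: the post-processor, given $(PK,SK,U_d)$ and the fixed $m_1,\dots,m_\nu$, simply samples the remaining blocks $U_{\ell,k}$ (for $k\neq j$) and the remaining ciphertexts using independent fresh randomness, then inserts $X$ into position $j$. This is well-defined precisely because the randomness used for position $j$ is independent of all other randomness in the experiment. Since statistical distance is non-increasing under (even randomized) post-processing, Theorem~\ref{thm:ext-scheme} yields
\begin{displaymath}
	\sdist{H_{j-1}-H_j} \;\leq\; \sdist{(PK,SK,U_d,C_j)-(PK,SK,U_d,C'_j)} \;\leq\; \eps_0(\kappa)
\end{displaymath}
for some negligible $\eps_0$. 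Applying the triangle inequality over the $\nu$ hybrid steps gives $\sdist{H_0-H_\nu}\leq \nu\cdot\eps_0(\kappa)$. Since $\ell\in\poly$ and $v=O(\log\kappa)$, $\nu=\ell/v$ is polynomial in $\kappa$, so $\eps\defeq \nu\cdot\eps_0$ is negligible, proving the claim.

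\medskip
\noindent\textbf{Anticipated obstacle.} The only real care needed is in the hybrid step: one must verify that the ``unchanged'' coordinates can indeed be generated as a function of $(PK,SK,U_d,X)$ without using any randomness that also feeds into $X$. This holds because the rejection-sampler randomness for position $j$, the block $U_{\ell,j}$, and the encryption randomness for position $j$ are all independent of the corresponding quantities at other positions, and the seed $U_d$ and keys $(PK,SK)$ are shared but fixed. Given this, the reduction to Theorem~\ref{thm:ext-scheme} is immediate and no further ideas are required.
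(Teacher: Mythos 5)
Your proposal is correct and matches the paper's proof in essence: the paper also bounds the distance of the joint tuples by a sum of per-block distances (using independence of the blocks conditioned on the keys and seed) and invokes Theorem~\ref{thm:ext-scheme} for each term, concluding with $\nu\eps$ negligible since $\nu\in\poly$. Your hybrid formulation with the data-processing step is just a more explicit rendering of the same one-line argument, and if anything it is slightly more careful about why the shared seed $U_d$ does not break the per-block reduction.
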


\begin{proof}
	Define $K=(PK, SK)$, then:
	\begin{displaymath}
		\sdist{\big(K, U_d, (C_i)_{i\in[\nu]}\big)
		-\big(K, U_d, (C_i')_{i\in[\nu]}\big)}
		\leq\sum_{i=1}^\nu\sdist{(K, C_i) - (K, C_i')} \leq \nu\eps\;,
	\end{displaymath}
	where the first inequality is by independence of the sequences
	$(C_i)_{i\in[\nu]}$ and $(C_i')_{i\in[\nu]}$ conditioned on the keys, and
	the second inequality is by Theorem~\ref{thm:ext-scheme}.
\end{proof}

Finally, we observe that almost universal hashing is a strong seeded extractor
for super-logarithmic min-entropy which has negligible error when the output
length is $O(\log(\kappa))$ (Proposition \ref{prop:auh}). This exactly
satisfies the requirement of Theorem \ref{thm:ext-scheme}. Moreover, the seed
legnth of this extractor is only super-logarithmic, meaning that the seed can
be established, in Step \ref{itm:seed} of the Setup Phase (Definition
\ref{def:outline}), in $\widetilde O(\log\kappa)$ many exchange-rounds of
communication.

\begin{proposition}
	\label{prop:auh}
	Let $\delta$ be a negligible function and let $\cH$ be a family of
	$\delta$-almost pairwise independent hash functions mapping $\zo^n$ to
	$\zo^{c\cdot \log n}$, then the extractor
	$\Ext:\cH\times\zo^n\to\zo^{c\cdot\log n}$ defined by $\Ext(h, x) = h(x)$
	is a strong seeded extractor for super-logarithmic min-entropy.
	Furthermore, there exists an explicit  family $\cH$ of such hash functions,
	such that sampling uniformly from $\cH$ requires $O\big(c\cdot\log
	n + \log\frac{1}{\delta}\big)$ bits.
\end{proposition}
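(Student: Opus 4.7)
The plan is to invoke a generalized Leftover Hash Lemma (LHL) for $\delta$-almost pairwise independent families: for such $\cH \subseteq (\zo^n \to \zo^v)$ and any source $X$ with $\Me(X) \geq k$,
\begin{equation*}
	\sdist{(H, H(X)) - (H, U_v)} \leq \tfrac{1}{2}\sqrt{2^{v - k} + \delta},
\end{equation*}
where $H$ is sampled uniformly from $\cH$. The proof is the standard one via the collision probability ${\sf CP}(H, H(X))$: expanding over collisions of $H$ and then of $H(X)$, the almost pairwise independence bounds each nontrivial collision by $(1+\delta)/2^v$, and converting the $\ell_2$ bound to statistical distance via Cauchy--Schwarz yields the additive $\delta$ under the square root.

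With this LHL in hand, the strong seeded extractor guarantee follows by parameter choice. Since $n = \poly(\kappa)$, the output length is $v = c \log n = O(\log \kappa)$. To establish that $\Ext$ is a $(d \log\kappa, 1/\kappa^d)$-strong extractor for any $d \in \NN$, choose the min-entropy threshold $d^\star \log \kappa$ with $d^\star$ large enough (as a function of $c$ and $d$) that $2^{v - d^\star \log \kappa} \leq 1/\kappa^{2d}$; combined with the negligibility of $\delta$, the LHL bound then falls below $1/\kappa^d$ for large enough $\kappa$. This matches the definition of a strong seeded extractor for super-logarithmic min-entropy.

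For the explicit construction, I would invoke the small-bias generator of Alon--Goldreich--H{\aa}stad--Peralta (or the almost $k$-wise independent construction of Naor--Naor), which produces $\epsilon$-biased distributions over $\zo^N$ with seed length $O(\log(N/\epsilon))$. Using such a generator to specify a matrix $A \in \mathbb{F}_2^{v \times n}$ and setting $h_A(x) = Ax$ yields a $\delta$-almost pairwise independent family from $\zo^n$ to $\zo^v$ with seed length $O(\log(vn/\delta)) = O(c \log n + \log(1/\delta))$, since $v = c \log n$ and small-bias on the $2v$ bits $(h(x), h(y))$ translates to $\delta$-almost pairwise independence of hash outputs via Vazirani's XOR lemma.

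The only nontrivial step is the generalized LHL bound with the additive $\delta$ term, but the standard collision-probability argument adapts essentially verbatim; the rest is parameter bookkeeping and a pointer to well-known small-bias constructions. I therefore do not anticipate any significant obstacle beyond stating the $\delta$ dependence in the LHL carefully.
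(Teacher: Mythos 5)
Your proposal is correct and matches the paper's approach: the paper's ``proof'' of Proposition~\ref{prop:auh} is simply a citation to Srinivasan--Zuckerman \cite{SZ94}, and your argument---the generalized Leftover Hash Lemma bound $\tfrac{1}{2}\sqrt{2^{v-k}+\delta}$ via collision probability, the observation that $v=c\log n=O(\log\kappa)$ makes this negligible for any $\omega(\log\kappa)$ min-entropy source, and an explicit small-seed family from $\eps$-biased distributions---is exactly the content behind that citation. One small fix: the purely linear family $h_A(x)=Ax$ with $A$ drawn from a small-bias distribution is only $\delta$-almost \emph{universal} rather than almost pairwise independent (e.g., $h_A(0)=0$ always), so either note that the collision bound is all the LHL needs, or add a uniformly random offset $b\in\zo^{c\log n}$ and take $h_{A,b}(x)=Ax\oplus b$, which restores literal almost pairwise independence at a seed cost of only $c\log n$ additional bits, within the claimed $O\big(c\log n+\log\tfrac{1}{\delta}\big)$.
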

\begin{proof}
	See \cite{SZ94}.
\end{proof}

\subsection{Full Protocol $(\OurPhi,\OurXi)$}
\label{sec:full-protocol}

First, we establish some notation for the syntax of a key-exchange protocol.

\begin{definition}[Key-exchange protocol syntax]
	A key-exchange protocol is a two-party protocol defined by
	$$\Lambda = (\Lambda_\pr{0}{1},\Lambda_\pr{1}{1},\Lambda_\pr{0}{2},\Lambda_\pr{1}{2},\dots,\Lambda_\pr{0}{k},\Lambda_\pr{1}{k},\Lambda_{\party{0},{\sf out}},\Lambda_{\party{1},{\sf out}})\ .$$ 
	We assume $k$ simultaneous communication rounds,
	where $\Lambda_\pr{b}{i}$ represents the computation 
	performed by $P_b$ in the $i$th round.
	The parties are stateful 
	and their state is implicitly updated at each round 
	to contain the transcript so far and 
	any local randomness generated so far.
	Each $\Lambda_\pr{b}{i}$ takes as input the transcript
	up to round $i-1$ and the state of $P_b$,
	 and outputs a message 
	$\lambda_\pr{b}{i}$ to be sent in the $i$th round.
	For notational simplicity, we write explicitly only the first input to $\Lambda_\pr{b}{i}$, 
	and leave the second input (\emph{i.e.}, the state) implicit.
	$\Lambda_{\party{0},{\sf out}},\Lambda_{\party{1},{\sf out}}$
	are run by $P_0,P_1$ respectively to compute the shared secret at the conclusion of the protocol.
\end{definition}

Next, we give the full construction of $(\OurPhi,\OurXi)$ following the outline in
Definition~\ref{def:outline}.

\begin{definition} $(\OurPhi,\OurXi)$ is parametrized by the following.
	\begin{itemize}
		\item $\Ext:\zo^d\times\zo^n\to\zo^{v}$, a strong seeded extractor.
		\item $\Lambda$, a pseudorandom key-exchange protocol with
		$\ell$-bit messages.\footnote{In presenting our construction $(\OurPhi,\OurXi)$, we do not denote the state of parties w.r.t. the key-exchange protocol $\Lambda$ by a separate variable, but assume that it is part of the state $\state_\pr{b}{i}$ of the overall protocol.}
		\item $\SKE$, a pseudorandom secret key encryption scheme with
			$\xi$-bit ciphertexts.
	\end{itemize}
	We define each phase of our construction in turn.
	\label{def:full-protocol}
	\begin{enumerate}\begin{linenumbers}
		\item \textbf{Setup Phase} $\OurPhi$
			\begin{enumerate}
				\item \textbf{Establishing a $d$-bit shared seed}
					\begin{itemize}
						\item For $b\in\zo$ and $i\in\{1,\dots,d\}$,
							$\OurPhi_\pr{b}{i}(m_\pr{b}{i},\pk_{1-b},\state_\pr{b}{i-1})$
							outputs a ciphertext
							$c_\pr{b}{i}=\E.\Enc(\pk_{1-b}, m_\pr{b}{i})$ and sets the
							updated state $\state_\pr{b}{i}$ to be the
							transcript of all protocol messages sent and
							received so far.
						\item At the conclusion of the $d$ exchange-rounds, each party updates his state to contain the seed $S$ which is defined by
						\begin{equation*}\label{eqn:seed}
							\ifsubmission
							S= \big(\Gt(c_\pr{0}{1},c_\pr{0}{2}),\Gt(c_\pr{1}{1},c_\pr{1}{2}),\dots,\Gt(c_\pr{1}{d-1},c_\pr{1}{d})\big)\;.
							\else
							S= \big(\Gt(c_\pr{0}{1},c_\pr{0}{2}),\Gt(c_\pr{1}{1},c_\pr{1}{2}),\dots, \Gt(c_\pr{0}{d-1},c_\pr{0}{d}),\Gt(c_\pr{1}{d-1},c_\pr{1}{d})\big)\;.
							\fi
						\end{equation*}
						This seed $S$ is assumed to be accessible in all future states throughout both phases during the remainder of the protocol.
					\end{itemize}
				\item \textbf{Subliminal key exchange} \\
					Let $\nu \defeq \frac{\ell}{v}$.
					Subliminal key exchange occurs over $k\cdot \nu$ exchange-rounds.

					\begin{itemize}
						\item For $j\in\{1,\dots, k\}$ and $b\in\zo$:
							\begin{itemize}
								\item $P_b$ retrieves from his state the
									key-exchange transcript so far
									$(\lambda_\pr{b}{j'})_{b\in\zo, j'<j}$.
								\item \linelabel{ln:KE-msg} $P_b$ 
									computes the next key-exchange message
									$$\lambda_\pr{b}{j}
									\gets\Lambda_\pr{b}{j}\big((\lambda_\pr{b}{j'})_{b\in\zo,
									j'<j}\big)\;.$$
								\item $P_b$ breaks $\lambda_\pr{b}{j}$ into
									$v$-bit blocks
									$\lambda_\pr{b}{j}
									= \lambda^1_\pr{b}{j}||\dots||\lambda^\nu_\pr{b}{j}$.
								\item The $\nu$ blocks are transmitted sequentially as follows. For $\iota\in\{1,\dots,\nu\}$:
									\begin{itemize}
										\item[] Let $i=d+(j-1)\nu+\iota$. \\
											$\OurPhi_\pr{b}{i}(m_\pr{b}{i},\pk_{1-b},\state_\pr{b}{i-1})$
											outputs $c_\pr{b}{i}\gets\Sigma^{\E,
											S}(\lambda_\pr{b}{j}^\iota,
											m_\pr{b}{i},\pk_{1-b})$ and sets
											the updated state
											$\state_\pr{b}{i}$ to contain the
											transcript of all protocol messages
											sent and received so far.
									\end{itemize}

								\item At the conclusion of the $\iota$
									exchange-rounds, each party $b\in\zo$
									updates his state to contain the $j$th
									key-exchange message $\lambda_\pr{1-b}{j}$
									computed as follows:
									\begin{displaymath}
										\lambda_\pr{1-b}{j}	= \Ext(S,
										c_\pr{b}{d+(j-1)\nu+1})||\dots||\Ext(S,c_\pr{b}{d+j\nu})
									\end{displaymath}
							\end{itemize}
						\item\linelabel{ln:skstar1} At the conclusion of the $k\cdot\nu$ exchange
							rounds, each party updates his state to contain the
							secret key $\sk^*$ computed as follows:\linelabel{ln:skstar2} 
							\begin{displaymath}
								\sk^* = \SKE.\Gen\left(1^\kappa;\Lambda_{\sf
								out}\big((\lambda_\pr{b}{j})_{b\in\zo,j\in[k]}\big)\right)\;.
							\end{displaymath}
					\end{itemize}
			\end{enumerate}
		\item \textbf{Communication Phase} $\OurXi$ \\
			Each communication phase occurs over $r'\defeq\xi/v$ exchange-rounds. \\ \newcommand{\notbeta}{\bar{\beta}}
			Let $\beta\in\zo$ be the initiating party and let $\notbeta=1-\beta$.

					$P_\beta$ performs the following steps.
						\begin{itemize}
							\item\linelabel{ln:SKE-ciphs} Generate $c^*\gets\SKE.\Enc(\sk^*,\msg)$. 
							\item Break $c^*$ into $v$-bit blocks
							$c^*=c^*_1||\dots||c^*_{r'}$.
						\end{itemize}
					For $i'\in\{1,\dots,r'\}$:

						\begin{itemize}
							\item Let $i'' = r+i'$.
							\item $\OurXi_\pr{0}{i'}(\msg,
								m_\pr{0}{i''},\pk_{\notbeta},\state_\pr{\beta}{i''-1})$
								outputs $c_\pr{\beta}{i''}\gets\Sigma^{\E,
								S}(c^*_{i'},
								m_\pr{\beta}{i''}, \pk_{\notbeta})$.
							\item
								$\OurXi_\pr{1}{i'}(m_\pr{\notbeta}{i''},\pk_{\beta},\state_\pr{\notbeta}{i''-1})$
								outputs
								$c_\pr{\notbeta}{i''}\gets\E.\Enc(\pk_\beta,m_\pr{\notbeta}{i''})$.
							\item Both parties update their state to contain
								the transcript of all protocol messages
								exchanged so far.
						\end{itemize}

					After the $r'$ exchange-rounds,
						$P_{\notbeta}$ computes $c^{**}$ as follows:
						\begin{displaymath}
							c^{**} = \Ext(S, c_\pr{\beta}{r+1})||\dots||\Ext(S,
							c_\pr{\beta}{r+r'})\;.
						\end{displaymath}
						Then, $P_{\notbeta}$ outputs
						$\msg'\gets\SKE.\Dec(\sk^*,c^{**})$. (That is,
						$\OurXi_{1, \sf
						out}(\state_\pr{\notbeta}{r'})=\msg'$.)
		\end{linenumbers}\end{enumerate}
\end{definition}

\subsection{Proof that $(\OurPhi,\OurXi)$ Is a Subliminal Communication Scheme}

Finally, we give the proof of our main theorem. We recall the statement here.

\begin{customthm}{\ref{thm:subl}}
	Assume there exists a pseudorandom key-exchange protocol.
	Then there is a multi-message subliminal communication scheme 
	$(\OurPhi,\OurXi)$, given in Definition~\ref{def:full-protocol}.
\end{customthm}
\begin{proof}
	We define three hybrids.
	\begin{itemize}
		\item {\sc Hybrid 0 (``real world''):}
			Parties execute $(\OurPhi,\OurXi)$.
		\item {\sc Hybrid 1:}
			Exactly like Hybrid 0, except that the seed $S$ in Phase~\ref{itm:seed} is replaced
			by a truly random $d$-bit string (the same string for both parties).
		\item {\sc Hybrid 2:}
			Exactly like Hybrid 1, except that the key exchange messages $\lambda_\pr{b}{j}$ in $\OurPhi$ are replaced by random strings.
			That is, Line~\lineref{ln:KE-msg} is replaced by:
			\begin{quote}
				$P_b$ 
				samples
				$\lambda_\pr{b}{j}
				\gets\zo^\ell$ at random.
			\end{quote}		
		\item {\sc Hybrid 3:}
			Exactly like Hybrid 2, except that the ciphertexts of $\SKE$ in $\OurXi$ are replaced by random strings.  That is, Line~\lineref{ln:SKE-ciphs} is replaced by:
			\begin{quote}
				Sample $c^*\gets\zo^\xi$ at random.
			\end{quote}
	\end{itemize}

	Hybrids 0 and 1 are indistinguishable by direct application of Theorem~\ref{thm:ext}.

	Hybrids 1 and 2 are indistinguishable by 
	the pseudorandomness of the key-exchange protocol (defined in Section~\ref{sec:prelims-crypto}).
	Note that it is essential that the $\Lambda$-transcript's indistinguishability from random holds 
	even in the presence of the established key $\sk^*$, since in our protocol $(\OurPhi,\OurXi)$,
	the later protocol messages are produced as a function of $\sk^*$.

	Hybrids 2 and 3 are indistinguishable because ciphertexts of $\SKE$ are pseudorandom in the absence of the corresponding secret key $\sk^*$. Because we already replaced the messages of $\Lambda$ with random messages independent of $\sk^*$, the distribution of all protocol messages in Hybrid 1 can be generated based on just the $\SKE$-ciphertexts $c^*$ (of line~\lineref{ln:SKE-ciphs}).	

	Finally, Hybrid 3 is indistinguishable from the ideal distribution 
	$${\sf Ideal}(\pk_1,\sk_1,\pk_2,\sk_2,\cM)$$ 
	from Definition~\ref{def:mm-subliminal} by Theorem~\ref{lem:comp}.
	Note that the rejection sampler $\Sigma^{\E,U_d}$ of Algorithm \ref{algo:rs}
	has a truly random $d$-bit seed, so to invoke Theorem~\ref{lem:comp} we rely on the fact that $S$ is truly random in Hybrid~3.
\end{proof}

\subsection{On the Setup Cost and Asymptotic Rate of $(\OurPhi,\OurXi)$}

\subh{Setup Cost.} The setup cost of our scheme can be broken down into
the costs of Step \ref{itm:seed} and Step \ref{itm:KE} as follows.

\begin{itemize}
	\item \emph{Step \ref{itm:seed}:} If our scheme is instantiated with the
		extractor $\Ext$ from Proposition~\ref{prop:auh}, then we need to
		establish a seed of length $\widetilde O(\log \kappa)$, which implies
		that $\widetilde O(\log\kappa)$ exchange-rounds are required in Step
		\ref{itm:seed}.  This is arguably the least efficient step in our
		scheme; this inefficiency stems from the use of the $\Gt$ extractor
		which only outputs one bit: to the best of our knowledge this the only
		extractor which applies to our setting. In Section~\ref{sec:improvs},
		we discuss ways in which this cost can be reduced under additional
		assumptions on the next-message distribution or the encryption scheme
		$\E$ by replacing $\Gt$ by extractors with longer outputs.
	\item \emph{Step \ref{itm:KE}:} The cost of this step is
		$k\cdot\frac{\ell}{v}$, where $k$ is the number of rounds of the
		key-exchange protocol $\Lambda$ that we use, $\ell$ is the length of
		messages in $\Lambda$ and $v$ is the output length of $\Ext$. 
		(Concretely, the Diffie-Hellman key exchange protocol achieves $k=1$
		and $\ell = O(\kappa)$.) If we use the extractor from
		Proposition~\ref{prop:auh} as $\Ext$, then we can achieve $v = c\log
		\kappa$ for any $c > 0$. This implies that the cost of Step \ref{itm:KE} is
		upper-bounded by $c'\frac{\kappa}{\log \kappa}$ for any $c'> 0$. Note
		that because the min-entropy of ciphertexts from $\E$ can be as small
		as $\omega(\log \kappa)$ and is \emph{a priori} unknown to the designer
		of the subliminal scheme, the output of $\Ext$ must be
		$O(\log \kappa)$.\footnote{Suppose not, \ie, suppose that the output of
		$\Ext$ were  $v\in\omega(\log\kappa)$ bits long. 
		Then the adversary could choose an encryption scheme whose ciphertexts 
		have min-entropy $z\in\omega(\log\kappa)\cap o(v)$ (\emph{e.g.},
		$z=\sqrt{v(\kappa)\log\kappa})$.
		Since the extractor output cannot have more min-entropy than its
		input, the extractor's output when evaluated on ciphertexts
		would be distinguishable from random $v$-bit strings.
		} The extractor $\Ext$ from Proposition~\ref{prop:auh} is
		optimal in this respect.
\end{itemize}

\subh{Asymptotic Rate.} The asymptotic rate of our scheme depends on the
output length of $\Ext$. Using the extractor from Proposition~\ref{prop:auh},
we can subliminally embed $c\log \kappa$ random bits per ciphertext of $\E$ and
hence achieve an asymptotic rate of $c\log\kappa$ for any $c>0$. As in the
discussion regarding the cost of Step \ref{itm:KE}, this is optimal given that
the min-entropy of ciphertexts can be as small as $\omega(\log \kappa)$.

\subh{Trade-off Between Running Time and Rate.} Note that the parameter
$c$ from the previous paragraph controls the trade-off between the running time
of our scheme and its asymptotic rate. Indeed, the expected running time of the
rejection sampler defined in Algorithm~\ref{algo:rs} is $O(\kappa^c)$, when 
embedding $c\log \kappa$ random bits. This trade-off is inherent
to rejection sampling, and it is an interesting open question to determine
whether it can be improved by an alternative technique.

\section{Improving Setup Cost}
\label{sec:improvs}

In this section, we present alternative constructions which improve the setup
cost of subliminal communication under additional assumptions, either on the
next-message distribution $\cM$, or on the public-key encryption scheme $\E$.
These additional assumptions allow us to replace the ``greater-than'' extractor
in Step \ref{itm:seed} of Definition \ref{def:outline} with extractors with
longer output, thus reducing the number of exchange-rounds required to
establish the shared seed.  Section~\ref{sec:improv-succinct} gives
a construction when $\E$ is ``succinct'' (\emph{i.e.}, has a constant expansion
factor).  Section~\ref{sec:improv-minent} presents a construction when $\cM$
has a known amount of min-entropy. Both these constructions yield a seed
establishment in two exchange-rounds.

\subsection{\emph{Succinct} Encryption Schemes}
\label{sec:improv-succinct}

Let us suppose that the adversarially chose scheme $\E$ is \emph{succinct}.
Here we define \emph{succinct} as having an expansion factor less than $2$.
Recall that the expansion factor is defined to be the ratio of ciphertext
length to plaintext length. Under this assumption, we can improve the number of
rounds in the construction of Section~\ref{sec:establish-seed} by replacing the
extractor $\Gt$ by the extractor $\textsc{Ble}$ from \cite{improv}. Note that
this extractor requires the min-entropy rate of the sources to be slightly
above $\frac{1}{2}$, yet ciphertexts of $\E$ only have min-entropy $\omega(\log
\kappa)$.  However, the succinctness assumption combined with semantic security
implies that cipertexts have sufficiently large HILL entropy: they are
computationally indistinguishable from sources of min-entropy rate slightly
above $\frac{1}{2}$. Since the extractor is a polynomial time algorithm, its
output when computed on ciphertexts from $\E$ will be computationally
indistinguishable from the uniform distribution.  Formally, we prove the
following theorem.

\begin{theorem}
	\label{thm:cc}
	Let us denote by $n$ (resp. $p$) the bit-lengths of ciphertexts (resp.
	plaintexts) from $\E$. 
	Let $\textsc{Ble}$ be the function constructed in \cite{improv} and let us
	denote by $\sB:\zo^n\times\zo^n\to\zo^v$ its truncation to
	$v=O(\log\kappa)$ bits.  With the same notations as Theorem~\ref{thm:ext},
	define:
	\begin{displaymath}
		S = \big(\sB(C_\pr{0}{1},C_\pr{0}{2}),\sB(C_\pr{1}{1},C_\pr{1}{2}),\dots,
		\sB(C_\pr{0}{2\rho-1},C_\pr{0}{2\rho}),\sB(C_\pr{1}{2\rho-1},C_\pr{1}{2\rho})\big).
	\end{displaymath}
	There is a negligible function $\eps$ such that
	$
		\sdist{(K_0,K_1, S)
		- (K_0, K_1, U_{2\rho v})}\leq~\eps
	$.
\end{theorem}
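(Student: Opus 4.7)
The plan is to mirror the structure of the proof of Theorem~\ref{thm:ext}, but with two technical upgrades to accommodate the $\sB$ extractor: (a) replacing min-entropy of ciphertexts by HILL entropy (exploiting succinctness), and (b) turning the resulting computational closeness into statistical closeness at the end (exploiting that the output length is $O(\log\kappa)$).

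First I would argue that succinctness plus semantic security implies that ciphertexts of $\E$ have HILL entropy rate strictly above $1/2$. Concretely, let $m^{\sf rand}$ be a uniformly random plaintext of $p$ bits: the distribution $\E.\Enc(pk,m^{\sf rand})$ has min-entropy at least $p$ (since encryption is injective given the randomness and key, and distinct plaintexts yield distinct ciphertexts), and since $p > n/2$ by the succinctness assumption, this distribution has min-entropy rate strictly above $1/2$ (with some slack). Semantic security then implies that for every fixed plaintext $m$, the distribution $\E.\Enc(pk,m)$ is computationally indistinguishable from $\E.\Enc(pk,m^{\sf rand})$, hence has HILL entropy rate strictly above $1/2$, with overwhelming probability over $pk$.

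Next, following the argument of Lemma~\ref{lemma:2ext-comp}, I would use semantic security (against an extractor that does not hold the secret key) to replace each pair $(C_\pr{b}{2i-1},C_\pr{b}{2i})$ with a pair of i.i.d.\ encryptions of the uniformly random plaintext $m^{\sf rand}$. Now each such pair consists of two independent samples from a distribution with HILL entropy rate above $1/2$, so there is a pair of genuine (statistical) distributions of min-entropy rate above $1/2$ with which our pair of ciphertexts is computationally indistinguishable even in the presence of $(PK,SK)$. Applying $\sB$ on both sides and invoking the $\textsc{Ble}$ extractor guarantee from \cite{improv} on the ``ideal'' pair yields
\[
\bigl(K_0,K_1,\sB(C_\pr{b}{2i-1},C_\pr{b}{2i})\bigr)\compIndist (K_0,K_1,U_v).
\]
Because $v=O(\log\kappa)$, the target space $\zo^v$ has polynomial size, so any non-negligible statistical distance from $U_v$ would yield a non-negligible distinguishing advantage by a simple thresholding distinguisher; hence the computational indistinguishability above is in fact statistical indistinguishability, yielding a negligible bound per chunk.

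Finally, a standard hybrid argument across the $2\rho$ independent chunks (exactly as in the proof of Theorem~\ref{thm:ext}, using independence of the ciphertexts conditioned on the keys) converts the per-chunk negligible bound into the overall bound $\sdist{(K_0,K_1,S)-(K_0,K_1,U_{2\rho v})}\leq \eps$. I expect the main obstacle to be bookkeeping the HILL-to-statistical passage cleanly: the extractor $\textsc{Ble}$ is specified with a statistical guarantee on min-entropy sources, and one must justify that its output on a HILL-indistinguishable source is computationally (and then, via small output length, statistically) close to uniform. All other pieces are routine adaptations of the arguments already made for the $\Gt$-based construction.
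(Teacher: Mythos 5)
Your proposal follows essentially the same route as the paper's proof: succinctness plus correctness gives encryptions of a uniformly random plaintext min-entropy $p>n/2$, semantic security (used against the extractor, which holds no secret key, as in Lemma~\ref{lemma:2ext-comp}) lets you treat each ciphertext pair as two i.i.d.\ such sources for $\textsc{Ble}$, the $O(\log\kappa)$-bit output of $\sB$ converts computational into statistical closeness, and the hybrid over the $2\rho$ pairs is exactly the argument of Theorem~\ref{thm:ext}. One caution on wording: the intermediate claim that the ciphertext pair itself is computationally indistinguishable from the ideal pair \emph{in the presence of $(PK,SK)$} is false as stated (a holder of $SK$ can simply decrypt); the keys can only enter in the final per-key statistical statement, which is precisely what the collision-based reduction of Lemma~\ref{lemma:2ext-comp} provides (rather than a ``thresholding'' distinguisher), but since you explicitly defer to that lemma this is a presentational slip rather than a genuine gap.
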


\begin{proof}
	For any $\gamma > 0$, the function $\textsc{Ble}$ from \cite{improv}  can
	extract $2p-n-\gamma p$ random bits with bias $\eps=2^{-1/2\gamma p}$ from
	two independent $n$-bit sources of min-entropy $p$. Note that for $n\leq
	(1-\gamma)2p$, $2p-n-\gamma p\geq \gamma p$, and by semantic security of
	$\E$, we have that $n$ (and hence $p$) are super-logarithmic. Hence the
	function $\sB$ defined in Theorem~\ref{thm:cc} is well-defined and has bias
	$\eps\in\negl$ for independent sources of min-entropy $p$.

	For public key $\pk$, define $Y^{\pk}=\E.\Enc(\pk, U_p)$ the random
	variable obtained by encrypting a message chosen uniformly at random. By
	correctness of $\E$, $Y^{\pk}$ is a flat source supported on $\zo^p$, hence
	$\Me(Y^{\pk}) = p$. Hence for $Z^{\pk}$ an independent copy of $Y^{\pk}$, we have
	$
		\sdist{\sB(Y^{\pk}, Z^{\pk})-U_v}\leq \eps(\kappa)
	$.
	By semantic indistinguishability of $\E$, we have that for any ciphertext
	$C$ of $\E$, $C\compIndist Y^\pk$; \emph{i.e.}, $C$ has HILL-entropy $p$.
	Adapting the proof of Lemma~\ref{lemma:2ext-comp} and using the same
	notation, since $v=O(\log\kappa)$, we get
	$
		\sdist{\big(PK, SK, \sB(C^{PK}_0, C^{PK}_1)\big)
		-
		\big(PK, SK, U_v\big)}\leq\eps
	$.
	From there, the proof of Theorem~\ref{thm:cc} follows verbatim the
	proof of Theorem~\ref{thm:ext}.
\end{proof}

\begin{remark}
	Theorem~\ref{thm:cc} implies in particular that a $\widetilde
	O(\log\kappa)$ random seed can be established in step \ref{itm:seed} of our
	subliminal scheme in only two exchange-rounds of communication, whereas the
	construction of Theorem~\ref{thm:ext} with the $\Gt$ extractor requires
	$\widetilde O(\log \kappa)$ exchange-rounds. 
\end{remark}

\subsection{Next-Message Distributions with Min-Entropy}
\label{sec:improv-minent}

The previous subsection reduced the number of rounds to establish the shared
seed by using a two-source extractor with more than one bit of output (unlike
the $\Gt$ extractor). However, this required $\E$ to be succinct to
guarantee that ciphertexts of $\E$ have sufficient HILL-entropy. In this
section, we observe that if the message distribution itself has enough
min-entropy, then we can use the \emph{plaintext} and corresponding
\emph{ciphertext} as a pair of sources to extract from, and obtain a similar
improvement without requiring that $\E$ be succinct. Note that this
construction again exploits the semantic security of $\E$ to guarantee that
the plaintext and the ciphertext are indistinguishable from independent sources
to the two-source extractor.

All we require of $\E$ in this subsection is that the ciphertext distribution
has min-entropy at least $\omega(\log\kappa)$, which follows from semantic
security (Lemma~\ref{lemma:cond-min-ent}); however, we additionally require
that the next message distribution $\mathcal{M}$ have min-entropy rate above
$\frac{1}{2}$.  While this is quite a lot of min-entropy to demand from
$\mathcal{M}$, we note that the precise requirement of $\frac{1}{2}$ arises
from the current state of the art in two-source extractors, and improved
two-source extractor constructions would directly imply improvements in the
min-entropy requirement of our construction. Recent research in two-source
extraction has been quite productive; with luck, future advances will provide
us with an alternative which demands much less entropy from $\mathcal{M}$.

\begin{theorem}
	\label{thm:mc}
	Suppose that $k_1$ is such that for all $k_2=\omega(\log\kappa)$ there
	exists a negligible function $\eps$ and a $(k_1,k_2,\eps)$-two source
	extractor $\TwoExt$ with output length $\ell=O(\log\kappa)$.  
	and let $\rho\in\poly$. Define random variables as follows.
	\begin{itemize}
		\item For $b\in\zo$, let $K_b=(PK_b, SK_b)=\E.\Gen(1^\kappa)$.
		\item For $b\in\zo$ and $i\in[\rho]$, let $M_\pr{b}{i}$ and
			$C_\pr{b}{i} = \E.\Enc(PK_{1-b}, m_\pr{b}{i})$ be the messages and
			ciphertexts exchanged between $P_0$ and $P_1$ in $\rho$
			exchange-rounds.
		\ifsubmission
		\item $
		S = \big(
		\TwoExt(M_\pr{0}{1},C_\pr{0}{1}), \TwoExt(M_\pr{1}{1}, C_\pr{1}{1}),
		\dots,
		\TwoExt(M_\pr{1}{\rho}, C_\pr{1}{\rho})
		\big)
			$.
		\else
		\item $
		S = \big(
		\TwoExt(M_\pr{0}{1},C_\pr{0}{1}), \TwoExt(M_\pr{1}{1}, C_\pr{1}{1}),
		\dots,
		\TwoExt(M_\pr{0}{\rho},C_\pr{0}{\rho}), \TwoExt(M_\pr{1}{\rho}, C_\pr{1}{\rho})
		\big)
			$.
		\fi
	\end{itemize}
	Then if the next message distribution $\mathcal{M}({\sf conv})$ has min-entropy at
	least $k_1$:
	\begin{displaymath}
		\sdist{(K_0,K_1, S)
		- (K_0, K_1, U_{\rho})}\leq \eps\;.
	\end{displaymath}
\end{theorem}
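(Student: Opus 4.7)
The plan is to closely mirror the proof of Theorem~\ref{thm:ext}, with the two-source extractor $\TwoExt$ applied to plaintext–ciphertext pairs playing the role of $\Gt$ applied to same-source ciphertext pairs. The key new ingredient is a plaintext–ciphertext analog of Lemma~\ref{lemma:2ext-comp}, after which a triangle inequality across the $2\rho$ exchange-rounds delivers the overall bound, exactly as in the concluding step of the proof of Theorem~\ref{thm:ext}.

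For the single-pair lemma, fix any transcript $\tau$ and let $M \sim \cM(\tau)$, $(PK, SK) \gets \E.\Gen(1^\kappa)$, and $C = \E.\Enc(PK, M)$. I would show that $(PK, SK, \TwoExt(M, C))$ is statistically close to $(PK, SK, U_\ell)$ in two steps. First, in the ``ideal'' distribution where $C$ is replaced by $C_0 = \E.\Enc(PK, 0)$ sampled independently of $M$, the two-source extractor guarantee applies directly: $M$ has min-entropy at least $k_1$ by the theorem's hypothesis on $\cM$, and $C_0$ has min-entropy at least $k_2 = \omega(\log\kappa)$ with overwhelming probability over the key by Lemma~\ref{lemma:cond-min-ent}, so $\TwoExt(M, C_0)$ is $\eps$-close to $U_\ell$. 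Second, I would bridge the real and ideal distributions using the collision-probability reduction already used in the proof of Lemma~\ref{lemma:2ext-comp}: a semantic-security distinguisher given $(pk, c)$ samples fresh $m \sim \cM(\tau)$, $m' \sim \cM(\tau)$, and $c' \gets \E.\Enc(pk, 0)$, submits $(m, 0)$ as the challenge pair, and outputs $1$ iff $\TwoExt(m, c) = \TwoExt(m', c')$. A non-negligible deviation of $\TwoExt(M, C)$ from uniform inflates the collision probability when $c$ encrypts $m$, whereas in the case where $c$ encrypts $0$ the ideal bound keeps the collision probability within $\negl(\kappa)$ of $2^{-\ell}$; since $\ell = O(\log\kappa)$, these two quantities are separated by an inverse polynomial, contradicting semantic security of $\E$.

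The extension from one pair to all $2\rho$ pairs mirrors the last display in the proof of Theorem~\ref{thm:ext}. Conditioned on the keys $(K_0, K_1)$ and the transcript up to the current round, each new message $M_{\pr{b}{i}}$ is drawn freshly from $\cM$ and each new ciphertext $C_{\pr{b}{i}}$ is produced with fresh encryption randomness, so a hybrid argument replacing one $\TwoExt(M_{\pr{b}{i}}, C_{\pr{b}{i}})$ at a time with a uniform $\ell$-bit block incurs only a negligible loss per pair, and a triangle inequality yields $\sdist{(K_0, K_1, S) - (K_0, K_1, U_{2\rho\ell})} \leq 2\rho \cdot \negl(\kappa)$, which is still negligible since $\rho \in \poly$.

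The main obstacle is the single-pair lemma. Beyond the mechanical adaptation of Lemma~\ref{lemma:2ext-comp}, the reduction must efficiently sample from the conditional next-message distribution $\cM(\tau)$; this is precisely the oracle-access assumption made throughout the paper. A secondary subtlety is that the min-entropy of $M_{\pr{b}{i}}$ is guaranteed only \emph{conditionally} on the prior transcript, so the hybrid must carry that transcript as part of the distinguisher's state and invoke the single-pair lemma with $\tau$ equal to the transcript immediately before the round under consideration. The theorem's assumption that $\cM(\mathsf{conv})$ has min-entropy at least $k_1$ for \emph{every} $\mathsf{conv}$ is exactly what makes this conditioning go through uniformly across all rounds.
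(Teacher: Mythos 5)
Your overall architecture matches the paper's proof: a single-pair lemma showing $(PK,SK,\TwoExt(M,C^{PK}_M))$ is statistically close to $(PK,SK,U_\ell)$ --- obtained by combining the two-source extractor guarantee on independent sources with a semantic-security reduction and a collision-probability upgrade in the style of Lemma~\ref{lemma:2ext-comp} --- followed by the independence/triangle-inequality step of Theorem~\ref{thm:ext}. However, the specific reduction in your bridging step is broken. Your distinguisher compares the challenge-dependent value $\TwoExt(m,c)$ against $\TwoExt(m',c')$ where $m'\gets\cM(\tau)$ is fresh and $c'$ encrypts $0$: that comparison value is a sample of the \emph{ideal} (independent-sources) distribution, which by the very extractor guarantee you invoke is within statistical distance $\eps$ of $U_\ell$. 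But the probability that a sample of an arbitrary distribution $P$ over $\zo^\ell$ collides with an \emph{independent} sample of a distribution $Q$ that is $\eps$-close to uniform is within $2\eps$ of $2^{-\ell}$ regardless of $P$, since $\bigl|\sum_x P(x)\bigl(Q(x)-2^{-\ell}\bigr)\bigr|\leq\max_x\bigl|Q(x)-2^{-\ell}\bigr|\leq 2\eps$. Hence in \emph{both} challenge cases your test outputs $1$ with probability $2^{-\ell}\pm O(\eps)$, the advantage is negligible no matter how far $\TwoExt(M,C_M)$ is from uniform, and no contradiction with semantic security is obtained; the claim that a deviation of the real distribution from uniform ``inflates the collision probability'' is false for this test.

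The repair is exactly the structure of Lemma~\ref{lemma:2ext-comp}: the distinguisher's self-made comparison value must come from the \emph{real} (correlated) distribution, i.e., sample $m''\gets\cM(\tau)$, encrypt it yourself as $c''\gets\E.\Enc(pk,m'')$, and output $1$ iff $\TwoExt(m,c)=\TwoExt(m'',c'')$. Then when $c$ encrypts $m$ the test estimates ${\sf CP}(P)\geq 2^{-\ell}(1+4\delta^2)$ for a real distribution $P$ that is $\delta$-far from uniform, whereas when $c$ encrypts $0$ the value $\TwoExt(m,c)$ is near-uniform and independent of $\TwoExt(m'',c'')$, so the collision probability is $2^{-\ell}+O(\eps)$; since $\ell=O(\log\kappa)$, a noticeable $\delta$ (with noticeable probability over the keys) then yields a noticeable advantage, contradicting semantic security. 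This is also how the paper argues, phrased slightly differently: semantic security gives $\TwoExt(M,C^{PK}_M)\compIndist\TwoExt(M',C^{PK}_M)$, the latter is $\eps$-close to uniform by the extractor property applied to the independent sources $M'$ and $C^{PK}_M$, and the upgrade to statistical closeness in the presence of the keys is done ``as in Lemma~\ref{lemma:2ext-comp}.'' With that correction, the remainder of your argument --- the per-round conditioning on the transcript, the use of oracle access to $\cM$, and the hybrid over the pairs --- goes through as in Theorem~\ref{thm:ext}.
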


\begin{proof}
	Let us consider $M$ and $M'$ two independent samples from the next-message
	distribution and let us denote by $C^{PK}_M$ and $C^{PK}_{M'}$ their
	encryption under key $PK$.
	Consider a two-source extractor $\TwoExt$ as in the statement of the
	theorem. By semantic security, it follows that for some negligible function
	$\eps'$:
	\begin{displaymath}
		\TwoExt(M, C_M^{PK})\compIndist_{\eps'}
		\TwoExt(M', C_M^{PK})\;.
	\end{displaymath}
	By property of $\TwoExt$ it follows that $\sdist{\TwoExt(M',
	C_M^{PK})-U_\ell}\leq\eps$. Since $\ell = O(\log \kappa)$, we can prove
	similarly to Lemma~\ref{lemma:2ext-comp}:
	\begin{displaymath}
		\sdist{\big(PK, SK, \TwoExt(M,C^{PK}_M)\big)-(PK, SK,
		U_{\ell})}\leq\eps+\eps'
		\;.
	\end{displaymath}
We can now conclude similarly to the proof of Theorem~\ref{thm:ext}.
\end{proof}

The following result from \cite{Raz05} implies that for any $\delta > 0$, when
$k_1=\frac{1}{2}+\delta$, two-source extractors exist which may be used in the
above theorem. The output length of such extractors is logarithmic, implying
that a $\widetilde O(\log \kappa)$-bit random seed $S$ can be established in
two exchange-rounds of communication.

\begin{lemma}[Theorem 1 in \cite{Raz05}]
	For any $\delta>0$, $k_1\geq\frac{1}{2}+\delta$ and
	$k_2=\omega(\log\kappa)$, there exists a negligible function $\eps$ and
	a $(k_1,k_2,\eps)$-two source extractor with output length
	$\ell=O(\log\kappa)$.
\end{lemma}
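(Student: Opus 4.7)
The plan is to invoke Raz's 2005 construction directly; no new ideas are needed beyond aligning parameters. Raz's theorem provides, for every constant $\delta > 0$, an explicit function $\TwoExt : \zo^n \times \zo^{n'} \to \zo^m$ such that, whenever the first source has min-entropy at least $(\tfrac{1}{2}+\delta)n$ and the second source has min-entropy at least $k_2$ with $k_2 \geq 5\log(n - k_1)$, the output is $\eps$-close to uniform with $m = \Omega(\min(n, k_2))$ and $\eps = 2^{-\Omega(\min(n, k_2))}$.

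To match our setting, I would instantiate $n(\kappa), n'(\kappa) = \poly(\kappa)$. The first source is then required to have absolute min-entropy at least $(\tfrac{1}{2}+\delta)n$ — exactly the rate condition in the statement. The second source is allowed to have any absolute min-entropy $k_2(\kappa) = \omega(\log\kappa)$. For $n = \poly(\kappa)$, the auxiliary constraint $k_2 \geq 5\log(n - k_1)$ becomes $k_2 = \Omega(\log\kappa)$, which is implied by $k_2 = \omega(\log\kappa)$ for all sufficiently large $\kappa$, so all of Raz's parameter constraints are satisfied.

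I would then truncate Raz's output to $\ell(\kappa) = c\log\kappa$ bits for an arbitrary constant $c > 0$; because the achievable output length $\Omega(\min(n, k_2))$ is $\omega(\log\kappa)$, this truncation is well within range, and statistical distance is non-increasing under truncation. The resulting error $\eps(\kappa) = 2^{-\Omega(\min(n, k_2))} = 2^{-\omega(\log\kappa)}$ is therefore negligible in $\kappa$, yielding a $(k_1, k_2, \eps)$ two-source extractor with output length $\ell = O(\log\kappa)$, as required.

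The only nontrivial bookkeeping item is translating Raz's theorem, which is typically stated for a fixed $n$ with a concrete error bound, into a family indexed by the security parameter $\kappa$; but this reduces to choosing $n(\kappa) = \poly(\kappa)$ and invoking Raz's construction uniformly per $\kappa$. There is no real obstacle here — the lemma is in essence a restatement of Raz's theorem in the asymptotic language used throughout the paper.
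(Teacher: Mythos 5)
Your proposal is correct and matches the paper's approach: the paper states this lemma purely as a citation of Theorem~1 of Raz (2005), and your write-up simply makes explicit the intended parameter translation into the $\kappa$-indexed asymptotic setting. In particular, your observation that one should run Raz's extractor at its full output length (where the error is $2^{-\Omega(\min(n,k_2))}$, hence negligible since $k_2=\omega(\log\kappa)$) and then truncate to $\ell=O(\log\kappa)$ bits is exactly the right bookkeeping, since invoking Raz directly with $m=O(\log\kappa)$ would only give inverse-polynomial error $2^{-1.5m}$.
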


\section{Open problems}
\label{sec:open}

\subh{Deterministic Extraction.}
Our impossibility result in Theorem~\ref{thm:impossibility-1} holds because the
adversary can choose the encryption scheme $\E$ as a function of a given
candidate subliminal scheme. However, note that under the additional assumption
that $\E$ is restricted to a predefined class $C$ of encryption schemes, we
could bypass this impossibility as long as a deterministic extractor that can
extract randomness from ciphertexts of any encryption scheme in $C$ exists. We
are only aware of two deterministic extractors leading to a positive result for
restricted classes of encryption schemes:
\begin{itemize}
	\item if an upper bound on the circuit size of $\E$ is known, then we can
	use the deterministic extractor from \cite{TV00}. This extractor relies on
	strong complexity-theoretic assumptions and requires the sources to have
	min-entropy $(1-\gamma)n$ for some unspecified constant $\gamma$.
	\item if $\E$ is computed by a circuit of constant depth (\textbf{AC$^0$}),
	then the deterministic extractor of \cite{Viola11} can be used and requires
	$\sqrt{n}$ min-entropy.
\end{itemize}

Note that both these extractors have a min-entropy requirement which is too
strict to be directly applicable to ciphertexts of arbitrary encryptions
schemes, but it would be interesting to give improved constructions for the
specific case of encryption schemes. This would also have direct implications
for the efficiency of the subliminal scheme we construct in
Section~\ref{sec:overt}: indeed, one could then skip Step \ref{itm:seed} and
use a deterministic extractor directly in Steps \ref{itm:KE} and and
\ref{itm:comm}, thus saving $\widetilde O(\log \kappa)$ exchange-rounds in the
setup phase.

\subh{Multi-Source Extraction.}
Another interesting question is whether multi-source extractors for the
specific case when the sources are independent and identically distributed can
achieve better parameters than extractors for general independent sources. We
already saw that a very simple extractor (namely, the ``greater-than''
function) works for i.i.d. sources and extracts one bit with negligible
bias, even when the sources only have $\omega(\log\kappa)$ min-entropy. The
non-constructive result of \cite{CG88} guarantees the existence of a two-source
extractor of negligible bias and output length $\omega(\log\kappa)$ for sources
of min-entropy $\omega(\log\kappa)$. However, known \emph{explicit}
constructions are far from achieving the same parameters, and improving them in
the specific case of identically distributed sources is an interesting open
problem which was also mentioned in \cite{BIW04}.

\ifsubmission\else
  \section*{Acknowledgements}

  We are grateful to Omer Paneth and Adam Sealfon for insightful remarks on an earlier draft of this paper.
\fi

\bibliographystyle{alpha}
\bibliography{backdoorrefs,bib,refs-sp}

\newpage
\appendix
\section{Steganographic communication schemes}
\label{a:imposs}

Here, we give a formal definition of a steganographic communication scheme,
formalized as a protocol consisting of $r$ exchange-rounds
in which two parties $P_0$ and $P_1$
engage in communication distributed according to a cover distribution $\cC$,
and in which $P_0$ steganographically transmits a hidden message $\sf msg$ to $P_1$.
The cover distribution $\cC$ is a parameter of the communication scheme.

\begin{definition}[Steganographic communication scheme]
\label{def:stego}
A \emph{steganographic communication scheme} with respect to a cover distribution $\mathcal{C}$ is a two-party protocol:
\begin{displaymath}
\Pi^\cC=\bigl(\Pi_\pr{0}{1}^\cC,\Pi^\cC_\pr{1}{1},\Pi^\cC_\pr{0}{2},\Pi^\cC_\pr{1}{2},
	\dots,\Pi^\cC_\pr{0}{r},\Pi^\cC_\pr{1}{r},;\Pi^\cC_{\party{1},{\sf out}}\bigr)
\end{displaymath}
	where $r\in\poly$ and where each $\Pi^\cC_\pr{b}{i}$ is a PPT
	algorithm with oracle access to a $\cC-$sampler.
	Party $P_0$ is assumed to receive as input a message $\msg$ 
	that is to be conveyed to $P_1$ in an undetectable fashion.  
	The protocol must satisfy the following syntax,
	correctness and security guarantees.

\begin{itemize}
\item \textbf{Syntax.} For each $i=1,\dots, r$:
	\begin{enumerate}
		\item $P_0$ draws $(c_\pr{0}{i},\state_\pr{0}{i})\leftarrow\Pi_\pr{0}{i}^\cC(\msg,\state_\pr{0}{i-1})$, locally stores $\state_\pr{0}{i}$, and sends
			$c_\pr{0}{i}$ to $P_1$.
		\item $P_1$ draws $(c_\pr{1}{i},\state_\pr{1}{i})\leftarrow\Pi_\pr{1}{i}^\cC(\state_\pr{1}{i-1})$, locally stores $\state_\pr{1}{i}$,
			and sends $c_\pr{1}{i}$ to $P_0$. 
	\end{enumerate}
		After the $r$th exchange-round, $P_1$ computes ${\sf msg}'=\Pi^\cC_{\party{1},{\sf
		out}}(c_\pr{0}{1},c_\pr{1}{1},\dots,c_\pr{0}{r},c_\pr{1}{r})$ and halts.
\item\textbf{Correctness.} For any $\msg\in\{0,1\}^\kappa$, $\msg'=\msg$ with overwhelming probability.
		The probability is taken over the randomness of all of the $\Pi^\cC_\pr{b}{i}$ and their $\cC$-samples.

\item\textbf{Steganographic Indistinguishability w.r.t. $\cM$.}
	The following distribution
	$$\Big\{(c_\pr{0}{1},c_\pr{1}{1},\dots,c_\pr{0}{r},c_\pr{1}{r}):\msg\gets\cM, (c_\pr{0}{i},\state_\pr{0}{i})\gets\Pi_\pr{0}{i}^\cC(\msg,\state_\pr{0}{i-1}), (c_\pr{1}{i},\state_\pr{1}{i})\gets\Pi_\pr{1}{i}^\cC(\state_\pr{1}{i-1})\Bigr\}$$
	is computationally indistinguishable from the cover distribution of length $2r-1$. (Note that $\cC_{2r-1}$ is not necessarily a product distribution.) 

\end{itemize}
\end{definition}

Depending on the specific application at hand, a steganographic communication scheme might require that steganographic indistinguishability hold w.r.t. \emph{all} message distributions $\cM$,
or alternatively, only require that it hold w.r.t. certain specific message
distributions (\emph{e.g.}, uniformly random messages).

Definition~\ref{def:stego} concerns the transmission of only a single message. There is a natural generalization of this definition to a more general \emph{multi-message} version, analogous to how Definition~\ref{def:mm-subliminal} is a multi-message generalization of Definition~\ref{def:subliminal} in the context of subliminal communication schemes. 
Much as observed in Remark~\ref{rmk:equivalence-mm} in the context of subliminal communication schemes, the natural multi-message generalization of Definition~\ref{def:stego} is \emph{equivalent} to Definition~\ref{def:stego} in the sense that the existence of any single-message scheme easily implies a multi-message scheme and vice versa; the multi-message version of the definition is mainly useful when considering the asymptotic efficiency of schemes. We present only Definition \ref{def:stego} here as the simpler single-message definition suffices in the context of an impossibility result.

The existing notions of public-key steganography
and steganographic key exchange, as well as the \emph{subliminal communication schemes} introduced in this work, are all instantiations of the more general definition of a (multi-message) steganographic communication scheme.%
\footnote{Public-key steganography protocols can be thought of as being parametrized by the public/private key pairs of the communicating parties, which are initially sampled by the parties according to some key generation algorithm. However, the definition of a steganographic communication protocol as stated is not parametrized in an analogous way. This syntactic discrepancy can be resolved by equivalently thinking of the public-key steganography as a \emph{family} of steganographic communication protocols each of which has the key pairs hardwired, induced by the sampling of key pairs according to the key generation algorithm; and then requiring the steganographic indistinguishability guarantee to hold only with overwhelming probability over key generation.}

\subsection{Proof of Theorem~\ref{thm:impossibility-2}}

In this subsection we give the proof of impossibility of non-trivial steganographic communication in the presence of an adversarially chosen cover distribution. We recall the statement of Theorem~\ref{thm:impossibility-2} from Section \ref{sec:imposs-stego}.

\begin{customthm}{\ref{thm:impossibility-2}}
	Let $\Pi$ be a steganographic communication scheme. Then for any $k\in\NN$,
		there exists a cover distribution $\cC$ of conditional min-entropy $k$ such
		that the steganographic indistinguishability of $\Pi$ does not hold for
		more than one message.
\end{customthm}
\begin{proof}
	The proof exploits the impossibility of extracting more than one bit
	from Santha-Vazirani (SV) sources (cf. Proposition 6.6 in \cite{vad10}).
	Recall that a $\delta$-SV source $X\in\zo^n$ is defined by the following
	conditions:
	\begin{displaymath}
		\delta\leq \Pr[X_i=1\,|\,X_1=x_1,\dots,X_{i-1}=x_{i-1}]\leq 1-\delta
	\end{displaymath}
	for all $i\in[n]$ and all $(x_1,\dots,x_{i-1})\in\zo^{i-1}$.

	Consider $\delta>0$ and $\Pi$ a steganographic communication scheme with
	$r$-rounds.  Let $n\in\NN$ be such that $\delta$-SV sources of length $n$
	have conditional min-entropy $k$ when interpreted as $2r-1$ blocks of size
	$\frac{n}{2r-1}$. Assume for contradiction that $\Pi$ can embed at least
	two messages $m_1$ and $m_2$. Steganographic indistinguishability applied
	to the case where $\msg\gets\{m_1, m_2\}$ implies the existence of
	a negligible $\eps$ such that $\bo[\Pi_{2,\sf out}(\cC_{2r-1})=m_1]$ is
	a random bit with bias at most $\eps$.

	But Proposition 6.6 in \cite{vad10} implies the existence of
	a $\delta$-SV source $\cC_{2r-1}$ such that $\Pr[\Pi_{2,\sf
	out}(\cC_{2r-1}) = m_1] \geq 1-\delta$ or $\Pr[\Pi_{2,\sf
	out}(\cC_{2r-1}) = m_1] \leq \delta$. In other words, $\bo[\Pi_{2, \sf
	out}(\cC_{2r-1}=m_1]$ has bias at least $\frac{1}{2}-\delta$. For
	$\delta<\frac{1}{2}-\eps$, this is a contradiction.
\end{proof}

\begin{remark}
	At first glance, it may seem that one could bypass this impossibility and communicate more than one bit of information by sequentially repeating the same steganographic scheme.  However, one could then apply Theorem~\ref{thm:impossibility-2} to the sequential composition and obtain a new cover distribution which would break this scheme.
\end{remark}

\section{Min-Entropy of Ciphertexts}
\label{a:me}

\begin{customlemma}{\ref{lemma:cond-min-ent}}
	Let $\PKE$ be a semantically secure encryption scheme. Then there exists
	a negligible function $\eps$ such that for all $\kappa\in\NN$,
	$m\in\Msg_{\kappa}$, writing $C^{pk}_m\sim\PKE.\Enc(pk,m)$:
	\begin{displaymath}
		\Pr\left[(pk, sk)\gets\PKE.\Gen(1^\kappa):
		\Me\big(C_m^{pk}\big)\geq\log\frac{1}{\eps(\kappa)}\right]
		\geq 1-\eps(\kappa)\;.
	\end{displaymath}
\end{customlemma}

\begin{proof}
	Let us assume by contradiction that there exists $p\in\poly(\kappa)$,
	a countably infinite set $I$ and a family $\{m_\kappa\}_{\kappa\in I}$ such
	that for all $\kappa\in I$:
	\begin{equation}
		\label{eq:cond-ent}
		\Pr\left[(pk, sk)\gets\PKE.\Gen(1^\kappa):
		\Me\big(C_{m_\kappa}^{pk}\big)\leq\log p(\kappa)\right]
		\geq \frac{1}{p(\kappa)}\;.
	\end{equation}
	Since $\PKE$ is non-trivial, for all $\kappa\in I$ there exists
	$m_{\kappa}'\neq m_{\kappa}$.

	Consider the distinguisher $\sD$ which on input $(pk, c)$ runs as follows:
	\begin{enumerate}
		\item Sample encryption $e\gets\PKE.\Enc(pk, m_{\kappa})$.
		\item If $e=c$ output 1, otherwise output 0.
	\end{enumerate}

	Writing $(PK, SK)\sim\PKE.\Gen(1^\kappa)$, we claim that $\sD$
	distinguishes $\big(PK, \Enc(PK, m_\kappa)\big)$ from $\big(PK, \Enc(PK,
	m'_\kappa)\big)$ with non-negligible advantage.  First note that by
	\eqref{eq:cond-ent}, with probablity at least $\frac{1}{p(\kappa)}$ over
	the draw of $pk$, $\Enc(pk, m_\kappa)$ has collision probability at least
	$\frac{1}{p(\kappa)^2}$. By definition, $\sD$ outputs 1 on input $(pk,
	\Enc(pk, m_{\kappa}))$ if and only if a collision occurs in step 1. Hence:
	\begin{displaymath}
		\Pr[\sD(PK, \Enc(PK, m_\kappa))=1]\geq \frac{1}{p(\kappa)^3}\;.
	\end{displaymath}

	Second, since $\PKE$ is correct, there exists a negligible function $\eps'$
	such that $\Enc(pk, m_\kappa)=\Enc(pk,m_\kappa')$ with probability at most
	$\eps'$. Hence:
	\begin{displaymath}
		\Pr[\sD(PK, \Enc(PK, m'_\kappa))=1]\leq \eps'(\kappa)\;.
	\end{displaymath}
	The previous two inequalities together imply that for large enough
	$\kappa\in I$:
	\begin{displaymath}
		\big|\Pr[\sD(PK, \Enc(PK, m_\kappa))=1]
		-
		\Pr[\sD(PK, \Enc(PK, m'_\kappa))=1]\big|
		\geq \frac{1}{2p(\kappa)^3}\;,
	\end{displaymath}
	which contradicts the semantic security of $\PKE$.
\end{proof}

\section{Greater-Than is a Same-Source Extractor}
\label{a:gt}

\begin{customlemma}{\ref{lemma:gt}}
For any $k\leq n$, $\Gt$ is a $(k, \frac{1}{2^k})$ same-source extractor.
\end{customlemma}

\begin{proof}
	Let $X$ and $Y$ be two i.i.d. variables with $\Me(X)=\Me(Y)\geq k$. Then
	\begin{displaymath}
		2\Pr[X\geq Y] = \Pr[X\geq Y] + \Pr[X\leq Y] = 1 + {\sf CP}(X)\leq 1 + \frac{1}{2^k}\;,
	\end{displaymath}
	where the last inequality uses the standard upper bound on the collision
	probability ${\sf CP}(X)$ with respect to the min-entropy of $X$.
\end{proof}

\end{document}